\def\maxwidth{ %
  \ifdim\Gin@nat@width>\linewidth
    \linewidth
  \else
    \Gin@nat@width
  \fi
}
\newtheorem{theorem}{Theorem}
\newtheorem{lemma}{Lemma}
\DeclareMathOperator{\tr}{tr}
\DeclareMathOperator{\diag}{diag}
\DeclareMathOperator{\sign}{sign}
\DeclareMathOperator{\Unif}{Unif}
\DeclareMathOperator{\snr}{snr}
\DeclareMathOperator{\Var}{Var}
\DeclareMathOperator{\E}{\mathbb{E}}
\DeclareMathOperator{\TN}{TN}
\DeclareMathOperator{\CI}{CI}
\DeclareMathOperator{\bA}{\boldsymbol{A}}
\DeclareMathOperator{\bX}{\boldsymbol{X}}
\DeclareMathOperator{\bI}{\boldsymbol{I}}
\DeclareMathOperator{\bSigma}{\boldsymbol{\Sigma}}
\newtheorem{corollary}{Corollary}
\theoremstyle{definition}
\newtheorem{example}{Example}
\newtheorem{remark}{Remark}
\begin{document}

\def\spacingset#1{\renewcommand{\baselinestretch}%
{#1}\small\normalsize} \spacingset{1}

\title{\bf Tractable Post-Selection Maximum Likelihood Inference for the Lasso}
\author{Amit Meir\footnote{To whom correspondence should be addressed: amitmeir@uw.edu}
\\ Department of Statistics \\ University of Washington
 \and Mathias Drton \\ Department of Statistics\\ University of Washington}

\maketitle

\bigskip
\begin{abstract}
  Applying standard statistical methods after model selection may
  yield inefficient estimators and hypothesis tests that fail to
  achieve nominal type-I error rates.  The main issue is the fact that
  the post-selection distribution of the data differs from the
  original distribution.  In particular, the observed data is
  constrained to lie in a subset of the original sample space that is
  determined by the selected model. This often makes the
  post-selection likelihood of the observed data intractable and
  maximum likelihood inference difficult.  In this work, we get around
  the intractable likelihood by %%efficiently
 generating noisy unbiased
  estimates of the post-selection score function and using them in a
  stochastic ascent algorithm that yields correct post-selection
  maximum likelihood estimates.  We apply the proposed technique to
  the problem of estimating linear models selected by the lasso.  In an
  asymptotic analysis the resulting estimates are shown to be
  consistent for the selected parameters and to have a limiting truncated
  normal distribution.  Confidence intervals constructed based on the
  asymptotic distribution obtain close to nominal coverage rates in
  all simulation settings considered, and the point estimates are
  shown to be superior to the lasso estimates when the true model is
  sparse. 
\end{abstract}

\noindent%
{\it Keywords:} Stochastic Optimization; Model Selection; Selective Inference; Linear Regression
\vfill

\newpage
\spacingset{1.45} % DON'T change the spacing!

%%% Abstract %%%%%%%%%%%%%%%%%

\section{Introduction}
\subsection{Inference After Model Selection}
Consider the linear regression model 
$$y = \bX\beta + \varepsilon,$$ 
where $y \in \mathbb{R}^{n}$ is a response vector, $\bX\in \mathbb{R}^{n\times p}$ is a matrix of covariate values and $\varepsilon \in \mathbb{R}^{n}$ is a noise vector.  When the number of available covariates $p$ is large, it is often desirable or even necessary to specify a more succinct model for the data. This is commonly done by selecting a subset of the columns of $\bX$ to serve as predictors for $y$. Here, we focus on model selection with the lasso \citep{tibshirani1996regression}, which uses an $\ell_1$ penalty to estimate a sparse coefficient vector. 

A well known, yet not as well understood problem, is the problem of performing inference after a model has been selected. In particular, it is known that  confidence intervals for parameters in selected models often do not achieve target nominal coverage rates, hypothesis tests tend to suffer from an inflated type-I error rate and point estimates are often biased. A simple Gaussian example serves well to illustrate the issues that may arise when using the same data for selection and inference.

\begin{example}\label{firstUnifExample}
  Let $Y_{1},\dots,Y_{n} \sim f$ i.i.d., with $\E_{f}(Y_i) = \mu$ and
  $\Var_{f}(Y_i) = 1$. Furthermore, suppose that estimation of $\mu$ is
  of interest only if a statistical test provides evidence that it is
  nonzero.  Specifically, suppose that at a 5\%-level, we reject
  $H_0: \mu=0$ if $|\bar{y}|>1.96/\sqrt{n}$.  In this setting, if
  $|\mu|<1.96/\sqrt{n}$, the uncorrected estimator $\hat\mu=\bar{y}$
  will overestimate the magnitude of $\mu$ whenever we choose to
  estimate it.
\end{example}

An example of early work emphasizing the fact that data-driven model
selection may invalidate standard inferential methods is the article
by \citet{cureton1950validity}, with its aptly chosen title `validity,
reliability and baloney'.  Subsequently, this problem has been studied
in the context of regression modeling. In particular, it has been
shown that it is impossible to uniformly approximate the
post-selection distribution of linear regression coefficient estimates
\citep{Potscher91, leeb2005model, leeb2006can}.  

The field of post-selection (or selective) inference is concerned with
developing statistical methods that account for model selection in
inference.  
%These methods enable researchers to perform valid
%inference in the face of data-driven model selection.  The field has
%seen much growth in recent years. 
The majority of work in selective
inference is concerned with constructing confidence intervals and
performing tests after model selection; see for example
\citet{lee14marginal}, \citet{taylor2014exact},
\citet{benjamini2005false}, \citet{Weinstein13}, and
\citet{rosenblatt2014selective}.  The particular case of model
selection with $\ell_1$ penalization is treated by
\citet{Lee16} and \citet{lockhart2014significance}.
\citet{Fithian15} consider the general problem of testing
after model selection.  Estimation after model selection is in the
focus of the work of \citet{reid2014post},
\citet{benjamini2014selective}, and \citet{Routenberg15}.

In order to reconcile the aforementioned impossibility results with
the recent advances in post-selection
inference, we must clearly define the targets of inference.

\subsection{Targets of Inference}

In the context of variable selection in regression, let
$\mathcal{M} := \mathcal{P}(\{1,\dots,p\})$ be the set of models under
consideration, defined as the power set of the indices of the columns
of the design matrix $\bX$.  Further, let
$S : \mathbb{R}^{n} \rightarrow \mathcal{M}$ be a model selection
procedure that selects a model $M \in \mathcal{M}$ based on the
observed data $y \in \mathbb{R}^{n}$.

When discussing estimation after model selection in linear regression, one may consider two different targets for inference.  The first are the `true' parameter values in correct models where all variables with non-zero coefficient are present. An alternative target
for estimation is the vector of regression coefficients in the selected model
\begin{equation}\label{naiveest} 
\beta_0(y)= (\bX_M^{T}\bX_M)^{-1}\bX^{T}_M \E(Y).
\end{equation}
In \eqref{naiveest}, $M=S(y)$ is the selected model, and $\bX_M$ is the
sub-matrix of $\bX$ made up of the columns indexed by $M$.  These two
targets of estimation coincide when the selected model is true,
meaning that it contains all variables that have a non-zero regression
coefficient.  Indeed, if the observed value $y$ is such that
$S(y) = M$ for a model $M$ that contains all covariates with non-zero
coefficients, then $\E(y) =\bX_M\beta_0^M$ and $\beta_0^M =
\beta_0(y)$. Here $\beta_0^M$ is the vector of non-zero true
coefficients padded with zeros to make it a vector of length $|M|$.

\citet{Potscher91} and \citet{leeb2003finite} study the
behavior of least squares coefficients as estimators of the true
regression coefficients in a sequential testing setting. In contrast, works such as
\citet{Berk13} and \citet{leeb2015various} consider inference
with respect to the regression coefficients in the selected model. In
this work, we follow the latter point of view, taking the stance that
a true model does not necessarily exist or, even if one exists, may be
difficult to identify. Thus, the interest is in the
parameters of the model the researchers have decided to investigate.

\subsection{Conditioning on Selection} \label{sub:condOnSelection}

A data-driven model selection procedure tends to choose models that
are especially suited for the observed data rather than the
data-generating distribution. In linear regression this would often be
in the form of inclusion of variables that are correlated with the
dependent variable only due to random variation. A promising approach
for correcting for this bias towards the observed data is to condition
on the selection of a model.

\begin{figure}[t]
\begin{center}
  \includegraphics[width=6 in]{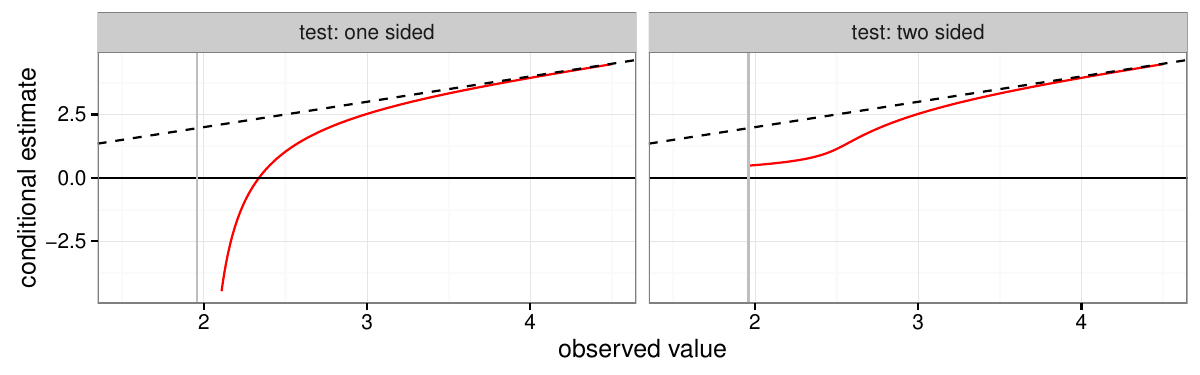} \caption[]{Conditional
    estimators for the univariate normal distribution exhibiting an adaptive shrinkage behavior. The right panel describes the conditional estimator
    when selection is two-sided: $|y| > 1.96$.  The left panel
    describes one-sided selection: $y > 1.96$. The red line plots the
    value of the conditional estimator as a function of the observed
    value, the dashed line is the $x = y$ line and the grey line marks
    the threshold. In the left plot, the conditional estimate
    asymptotes to $-\infty$ as the observed value approaches the
    threshold.}
\label{univFigure}
\end{center}
\end{figure}

\begin{example}
  Consider once again the univariate normal example, simplified via
  sufficiency to a single observation. Let $Y \sim N(\mu, 1)$ and
  assume that we are interested in estimating $\mu$ if and only if
  $|Y|>c$ for some constant $c>0$. Standard inferential techniques
  assume that we observe values from the distribution
  $Y \sim N(\mu, 1)$. However, when inference is preceded by testing
  we never observe any values $-c < Y < c$ and the post-selection
  distribution of the observed value is not normal but truncated
  normal.  Thus, the conditional post-selection maximum likelihood
  estimator (MLE) is:
$$
\hat\mu = \arg\max_\mu f(y|\{|Y|>c\}) = \arg\max_\mu
\frac{f_\mu(y)}{P(|Y|>c)} I_{\{|Y|>c\}}.
$$

The right-hand panel of Figure \ref{univFigure} plots the
post-selection MLE (as a function of $y$) for the two-sided case
described above.  Since this MLE is an even function we show the graph
only for $y>0$. The left-hand panel describes the post-selection MLE
for the one-sided case where we estimate $\mu$ if $y>c$. In the
two-sided case the estimator is an adaptive shrinkage estimator that
shrinks the observed value towards zero when it is close to the
threshold and keeps it as it is when its magnitude is far away from
the threshold.
\end{example}

More generally, let $Y \sim f_\theta$ follow a distribution from an
exponential family with sufficient statistic
$T(Y)\in\mathbb{R}^{p}$. The likelihood of $T(y)$ given that model $M$
has been selected is
$$
\mathcal{L}_M(\theta) = \frac{P(M|T(y)) f(T(y))}{P(M)} I_{M},
$$
where we use the shorthand
$P\left(M|T(y)\right) := P\left(S(Y) = M | T(Y) = T(y)\right)$ for the
conditional probability of selecting model $M$ given $T(y)$.
Similarly, $P(M) := P(S(Y) = M)$ is the unconditional probability of
selecting $M$, $f(T(y))$ is the unconditional density function of
$T(y)$, and $I_M=I_{\{S(y)=M\}}$ is the indicator function for the
selection event.

The main obstacle in performing post-selection maximum likelihood
inference is the computation of the probability of model selection
$P(M)$, which is typically a $p$ dimensional integral. Such integrals
are difficult to compute when $p$ is large, and much of the work in
the field of post-selection inference has been concerned with getting
around the computation of these integrals. For example,
\citet{Lee16} propose to condition on the signs of the
selected variables as well as some additional information contained in
the sub-space orthogonal to the quantity of interest in order to
obtain a tractable post-selection
likelihood. \citet{panigrahi2016bayesian} approximate $P(M)$ with a
barrier function.

Conditioning on information beyond the selection of the model of
interest, while having the benefit of providing tractable solutions to
the post-selection inference problem, may drastically change the form
of the likelihood. Consider once again the post-selection estimators
for the univariate normal problem (Figure \ref{univFigure}). Suppose
that we observe $y>0$. Then the right-hand panel plots the conditional
estimator for the scenario where two-sided testing is performed. 
%This estimator is essentially equal to the observed value $y$ when $y$ is
%far from the threshold and a value above zero when $y$ is close to the
%threshold. 
On the left-hand side we plot the conditional estimator for
$\mu$ as a function of $y$ when we condition on the two-sided
selection event as well as the sign of $y$. Indeed, since our observed
value is positive we condition on $\{|Y| > c, Y >0\} = \{Y >
c\}$. This second estimator is close to the observed value $y$ when
$y$ is far from the threshold but approaches negative infinity as
$y\rightarrow c$, see Appendix C for
details. Thus, even in the univariate normal case, conditioning on the
sign of $y$ in two-sided testing, may drastically alter the resulting
conditional estimator.

\subsection{Outline}
In this work, instead of working with the intractable post-selection
likelihood, we base our inference on the post-selection score function
which can be approximated efficiently even in multivariate problems. The
following lemma describes the post-selection score
function for exponential family distributions.

\begin{lemma}\label{scoreLemma}
  Suppose the observation $y$ is drawn from a distribution $f_\theta$
  that belongs to an exponential family with natural parameter
  $\theta$ and sufficient statistic $T(y)$.  If the model selection
  procedure $S(y)$ satisfies $P(S(Y) = M\,|\,T(y)) \in \{0,1\}$ for a given
  model $M$, then the conditional (post-selection) score function is given by:
\begin{align}
\frac{\partial}{\partial\theta} \log\mathcal{L}(\theta) &= T(y) - \E_\theta\left(T(Y) \middle| M\right).
\end{align}
\end{lemma}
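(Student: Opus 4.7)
The plan is to compute the post-selection score and observed information by direct differentiation of the conditional log-likelihood. The crucial consequence of the assumption $P(S(y)=M\,|\,T(y)) \in \{0,1\}$ is that $I_M$ is (almost surely) a deterministic function of $T(y)$ alone, and therefore does not depend on $\theta$ once we condition on $T$. Writing the exponential family density in natural form as $f_\theta(T(y)) = h(T(y))\exp(\theta^{\top} T(y) - A(\theta))$, the conditional density of $T(y)$ given the event $M$ equals $f_\theta(T(y))\, I_M / P_\theta(M)$, so that on $\{S(y)=M\}$
$$
\log \mathcal{L}(\theta) \;=\; \log h(T(y)) + \theta^{\top} T(y) - A(\theta) - \log P_\theta(M).
$$

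I would then differentiate with respect to $\theta$. Two standard exponential-family identities do most of the work: $\nabla A(\theta) = \E_\theta(T(y))$, and differentiation under the integral sign yields
$$
\frac{\partial}{\partial \theta} P_\theta(M) \;=\; \int \bigl(T(y) - \nabla A(\theta)\bigr) f_\theta(y)\, I_M\, dy \;=\; \E_\theta\bigl[(T(y) - \E_\theta T(y))\, I_M\bigr].
$$
Dividing by $P_\theta(M)$ gives $\partial \log P_\theta(M)/\partial \theta = \E_\theta(T(y)\,|\,M) - \E_\theta(T(y))$. Substituting into the score, the two unconditional means cancel and leave $T(y) - \E_\theta(T(y)\,|\,M)$, as required.

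For the Fisher information, only the term $-\E_\theta(T(y)\,|\,M)$ in the score depends on $\theta$, so I need the Jacobian of $\theta \mapsto \E_\theta(T(y)\,|\,M)$. Writing this conditional expectation as the ratio $N(\theta)/P_\theta(M)$ with $N(\theta) = \int T(y)\, f_\theta(y)\, I_M \, dy$ and applying the same differentiation-under-the-integral trick (which contributes a factor of $(T(y) - \nabla A(\theta))^{\top}$ inside each integrand), the quotient rule produces
$$
\frac{\partial}{\partial \theta^{\top}} \E_\theta(T(y)\,|\,M) \;=\; \E_\theta\bigl(T(y) T(y)^{\top} \,\big|\, M\bigr) - \E_\theta(T(y)\,|\,M)\,\E_\theta(T(y)\,|\,M)^{\top} \;=\; \Var_\theta\bigl(T(y)\,\big|\, M\bigr),
$$
where the cross terms containing $\nabla A(\theta)$ cancel in pairs. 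Negating then delivers the claimed information.

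The main obstacle I expect is the routine but technical step of justifying the interchange of differentiation and integration in the expressions for $\partial P_\theta(M)/\partial \theta$ and $\partial N(\theta)/\partial \theta$. This is standard in the interior of the natural parameter space of a regular exponential family, where the moment generating function of $T(y)$ is finite in a neighborhood of $\theta$ and a dominated-convergence argument applies uniformly on compact subsets. Everything else is bookkeeping, and the whole argument really turns on the assumption $P(S(y)=M\,|\,T(y)) \in \{0,1\}$: without it, a factor $P_\theta(M\,|\,T(y))$ would survive in the likelihood and contribute additional $\theta$-dependent terms that would spoil the clean score and information identities above.
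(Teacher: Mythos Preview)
Your argument is correct and self-contained. The paper's own proof is a one-line appeal to a structural fact: because $P(M\,|\,T(y))\in\{0,1\}$, the conditional law of $y$ given $M$ is again an exponential family with the same natural parameter $\theta$ and sufficient statistic $T(y)$ (only the base measure and log-partition change), and the paper cites \citet{fithian2014optimal} for this. The standard exponential-family identities $\nabla \log\mathcal{L}(\theta)=T-\E_\theta T$ and $-\nabla^2\log\mathcal{L}(\theta)=\Var_\theta T$ then apply verbatim to the conditional family. Your proof unpacks exactly this by explicit differentiation, verifying the cancellations by hand rather than invoking the abstract statement; the key observation---that $I_M$ is $T(y)$-measurable and hence $\theta$-free---is the same in both. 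Your version has the advantage of being self-contained and of making the regularity requirement (differentiation under the integral, valid in the interior of the natural parameter space) explicit, while the paper's version is shorter and highlights the conceptual reason the formulas hold.
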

\begin{proof}
  This result follows directly from the fact that the conditional
  distribution of an exponential family distribution is also an
  exponential family distribution as long as
  $P(M|T(y)) \in \{0,1\} $. See \citet{Fithian15} for
  details.
\end{proof}

In the specific setup we consider subsequently, the conditional
distribution of $T(Y)$ given $M$ is a multivariate truncated normal
distribution. While it is then difficult to compute $\E(T(Y)|M)$, we
are able to sample efficiently from the multivariate truncated normal
distribution using a Gibbs sampler \citep{geweke1991efficient}. The
main idea behind the method we propose is to use the samples from the
truncated multivariate normal distribution as noisy estimates of
$\E(T(Y)|M)$ and take small incremental steps in the direction of the
estimated score function, resulting in a fast stochastic gradient
ascent algorithm.  Our framework has similarities with the
contrastive divergence method of \citet{hinton2002training}.

The rest of the article is structured as follows. In Section 2 we
present the proposed inference method in detail and apply it to
selective inference on the mean vector of a multivariate normal
distribution. In Section 3 we describe how the proposed framework can
be adapted for post-selection inference in a linear regression model
that was chosen by the lasso. In Section~\ref{sec:asympt-cond-estim}
we formulate conditions under which the conditional MLE is consistent.  A
simulation study in Section 5 demonstrates that the proposed approach
yields improved point estimates for the regression coefficients, and that our confidence intervals, despite lacking a rigorous theoretical justification, achieve close to nominal coverage
rates. Finally, in Section 6 we conclude with a discussion.

\section{Inference for Selected Normal Means}\label{sec:selectedMeans}
Before considering the Lasso, we first discuss the simpler problem of
selectively estimating the means of a multivariate normal
distribution. Let $Y \sim N(\mu, \bSigma)$ with mean vector
$\mu \in \mathbb{R}^{p}$ and a \emph{known} covariance matrix
$\bSigma$.  Observing $y$, we select the model
\begin{equation}\label{mvnSelection}
M = \left\{j\in \{1,\dots,p\} :\  y_j \leq l_j\ \text{or}\ y_j  \geq u_j \right\},
\end{equation}
where $l_1,\dots,l_p,u_1,\dots,u_p \in [-\infty, \infty]$ are
predetermined constants with $l_1 < u_1,\dots,l_p < u_p$. We
then perform inference for the coordinates $\mu_j$ with
$j \in M$ (or possibly inference for a function of these coordinates).

This seemingly simple problem has garnered much attention. For the
univariate case of $p=1$, \citet{Weinstein13} propose a
method for constructing valid confidence intervals, and
\citet{benjamini2014selective} compute the post-selection MLE for
$\mu$. For $p\gg1$, \citet{Lee16} develop a recipe for
constructing valid confidence intervals for the selected means or
linear functions thereof. \citet{reid2014post} discuss ML estimation
when $\bSigma = \sigma^{2} \bI$.  To the best of our knowledge, the method we propose
below is the first to address the computation of the conditional MLE
when $p\gg1$ and the covariance matrix $\bSigma$ is of general
structure.

Conditionally on selection, the distribution of $y$ is truncated multivariate normal, as the $j$th coordinate of $y$ is constrained to lie in the interval $(l_j,u_j)$ if $j\notin M$ or in its complement if $j\in M$. 
%The conditional post-selection likelihood can be written as 
%$$
%\mathcal{L}(\mu) = \frac{\varphi(y; \mu,\Sigma)}{P_{\mu}(M)} I_{M},
%$$
%where $\varphi$ is the multivariate normal density. 
In Section \ref{sub:truncnormsamp} we describe the Gibbs sampler we use to sample from a truncated multivariate normal distribution, in Section \ref{sub:SGA}  we describe how such samples can be used to compute the post-selection estimator and in Section \ref{sub:normCI} we propose a method for constructing confidence intervals based on the conditional MLE and samples obtained from the truncated normal distribution.

\subsection{Sampling from a Truncated Normal Distribution}\label{sub:truncnormsamp}
Sampling from the truncated multivariate normal distribution is a well studied problem \citep{griffiths2004gibbs, pakman2014exact}. We
choose to use the Gibbs sampler of \citet{kotecha1999gibbs}, as it is
especially suited to our needs and simple to implement. 

Assume we wish to generate a draw from the univariate truncated normal
distribution constrained to lie in the interval
$[l,u]\subseteq[-\infty,\infty]$. This distribution has CDF
$$
\Phi(y ; \mu,\sigma^{2},l,u) := \frac{\Phi(y;\mu,\sigma^{2}) - \Phi(l;\mu,\sigma^{2})}{\Phi(u;\mu,\sigma^{2}) - \Phi(l;\mu,\sigma^{2})},
$$
where $\Phi(y;\mu,\sigma^{2})$ denotes the CDF of the (untruncated)
univariate normal distribution with mean $\mu$ and variance
$\sigma^2$. A simple method for sampling from the truncated normal
distribution samples a uniform random variable $U\sim U(0,1)$
and sets
\begin{equation}\label{univtruncatedinv}
y = \Phi^{-1}(U;\mu,\sigma^{2},l,u) = \Phi^{-1}\left(U\left(\Phi(u) - \Phi(l)\right) - \Phi(l)
;\mu,\sigma^{2}\right).
\end{equation}

Next, consider sampling from the truncated normal constrained to the
set $(-\infty, l]\cup[u,\infty)$.  In this case, we may first sample a
region within which to include $y$ and then sample from a truncated univariate normal distribution
constrained to the selected region using the formula given in
\eqref{univtruncatedinv}.

Given this preparation, we may implement a Gibbs sampler for a
truncated multivariate normal distribution as follows.  Let
$y\sim N(\mu,\bSigma)$, and let $f(y|M)$ be the conditional
distribution of $y$ given the selection event.  While the marginal
distributions of $f(y|M)$ are not truncated normal, the full
conditional distribution $f(y_j | M, y_{-j})$ for a single coordinate
$y_j$ is truncated normal with parameters
\begin{equation*}
\mu_{j,-j} = \mu_j + \bSigma_{j,-j} \bSigma_{-j,-j} (y_{-j} - \mu_{-j}), \qquad
\sigma^{2}_{j,-j} = \bSigma_{j,j} - \bSigma_{j,-j}\bSigma^{-1}_{-j,-j}\bSigma_{-j,j}.
\end{equation*}
The Gibbs sampler repeatedly iterates over all
coordinates of $y$ and draws a value for $y_j$ conditional on $M$
and $y_{-j}$.  So at the $t$th iteration we sample
$$
Y^{t}_j \sim f(y_j | M,
y_1^{t},\dots,y_{j-1}^{t},y_{j+1}^{t-1},\dots,y_{p}^{t-1}),\qquad j=1,\dots,p.
$$
The support of the truncated normal distribution is determined by
whether or not $j \in M$. 

\subsection{A Stochastic Gradient Ascent Algorithm}\label{sub:SGA}
The Gibbs sampler described above can be used to closely approximate
$\E(Y|M)$ but computation of the likelihood $\mathcal{L}_M(\mu)$
remains intractable. However, for optimization of the likelihood, we
can simply take steps of decreasing size in the direction of the
evaluated gradient
\begin{equation}\label{stochasticStep}
\mu^{i} = \mu^{i-1} +\gamma_i\bSigma^{-1} \left(y - y^{i}(\mu^{i-1})\right),
\end{equation}
where $y$ is the observed data, $y^{i}(\mu^{i-1})$ is a sample from the truncated multivariate normal distribution taken at $\mu^{i-1}$ and the step size $\gamma_i$ satisfies:
\begin{equation}\label{gammaconditions}
\sum_{i=1}^{\infty} \gamma_i = \infty, \;\;\;\;\;\;\;\;\;\;\;\;\;\;\;\;\;\;\;\;\;\;\;
\sum_{i=1}^{\infty} \gamma_i^2 < \infty.
\end{equation}
\begin{figure}[t]
\begin{center}
\includegraphics[width= 6 in]{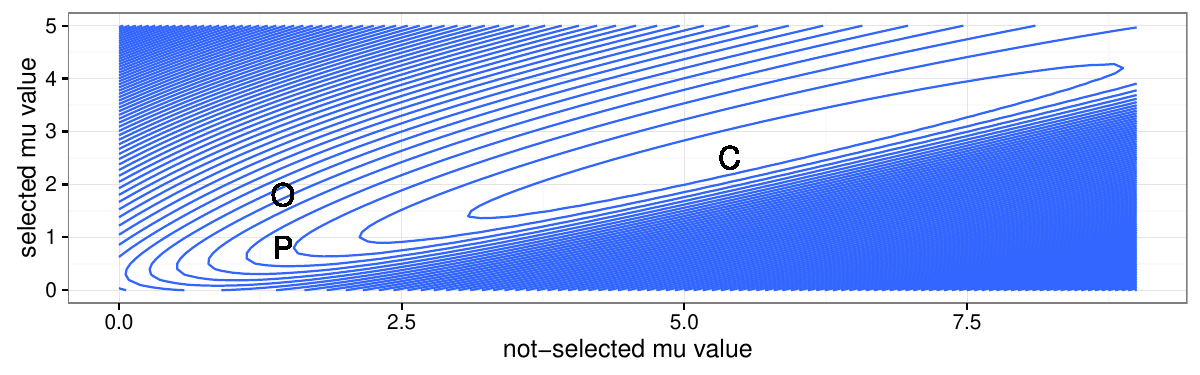} 
\caption[]{The contours of the conditional log-likelihood of a
  two-dimensional normal distribution. A selection rule $|y_j| > 1.65$
  was applied to the observed value $y = (1.45, 1.8)$ marked with
  `O'. The conditional MLE where all coordinates are estimated is
  marked with `C' at $(5.4, 2.5)$, and the plug-in conditional MLE which does not estimate the coordinates that were not selected is
  marked as `P' at $(1.45, 0.8)$. The plug-in estimator, unlike the full conditional MLE, is an adpative shrinkage estimator as in the univariate case. }
\label{2dllik}
\end{center}
\end{figure}

\begin{figure}[t]
\begin{center}
\includegraphics[width=5 in]{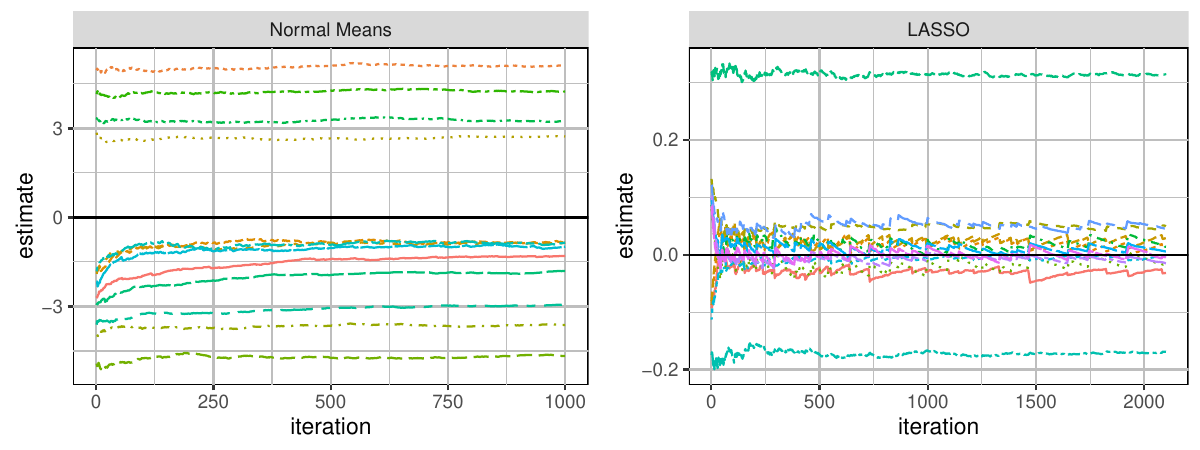} 
\caption[]{Convergence of the stochastic optimization algorithms. We
  plot the parameter estimates as a function of the number of gradient
  steps taken for the post-selection normal means estimation problem
  (left panel) and the post-selection regression estimation problem
  (right panel). The algorithms tend to converge to the neighborhood of the MLE in a few hundred iterations.}
\label{algConverge}
\end{center}
\end{figure}

We emphasize that while it is technically possible to compute an MLE for the
entire mean vector of the observed random variable, it is not
necessarily desirable. To see why, consider once again the left-hand
panel of  Figure \ref{univFigure} where the estimator tends to
$-\infty$ as the observed value approaches the threshold. Such erratic
behavior may arise when we estimate the coordinates of $\mu$ which
were not selected, based on observations that are constrained to lie
in a convex set, resulting in poor estimates
also for the selected coordinates.

\begin{example}
  \label{ex:two-dim-mean-select}
  %As an example of this phenomenon 
  We plot the conditional log-likelihood for a two-dimensional normal
  model in Figure \ref{2dllik}.  In such a low-dimensional case, the
  likelihood function can be computed using routines from the
  `mvtnorm' R package \citep{genz2016mvt}.  Our plot is for a setting
  where we observe $y = (1.45, 1.8)$ with
  $\bSigma_{ij} = 0.5^{I\{i\neq j\}}$, and only the first coordinate of
  $\mu$ was selected based on the thresholds $l_1 = l_2 = -1.65$,
  $u_1 = u_2 = 1.65$.  The point $y$ is marked in the figure as an
  `O', and the log-likelihood is maximized at the point marked with
  `C', which is $\hat\mu = (5.4, 2.5)$.  We see that instead of
  performing shrinkage on the observed selected coordinate, the
  selected coordinate was estimated to be far larger than the
  observed value.
\end{example}

In order to mitigate this behavior, we propose using a plug-in estimator for the coordinates outside of $M$. Particularly, we limit ourselves to taking steps of the form
\begin{equation}\label{mvtsteps}
\mu_j^i = 
\begin{cases}
\mu^{i-1} +\gamma_i\bSigma^{-1}_{j,.} \left(y - y^{i}(\mu^{i-1})\right)
& \text{if} \ j \in M, \\
y_j & \text{if} \ j\notin M,
\end{cases}
\end{equation}
where $\bSigma^{-1}_{j,.}$ is the $j$th row of $\bSigma^{-1}$.  In other
words, we impute the unselected coordinates of $\mu$ with the
corresponding observed values of $y$, and maximize the likelihood only
with respect to the selected coordinates of $\mu$. These plug-in
estimates for the coordinates of $\mu$ which were not selected are consistent, as
we show in Section~\ref{sec:asympt-cond-estim}.  The plug-in
conditional MLE for Example~\ref{ex:two-dim-mean-select} is shown as a
`P' in Figure \ref{2dllik}.  It is approximately
$\hat\mu = (1.45, 0.8)$.

Next, we give a convergence statement for the proposed
algorithm. Since our gradient steps are based on $y^{i}(\mu^{i-1})$, a
noisy estimate of $\E_{\mu^{i-1}}(Y|M)$, the resulting algorithm fits
into the stochastic optimization framework of
\citet{Bertsekas00}. In short, the theory for stochastic
optimization guarantees that taking steps in the form of
\eqref{stochasticStep} leads to convergence to the MLE as long as
the variance of the gradient steps can be bounded.

\begin{theorem}\label{thm:nmeans}
Let $Y \sim N(\mu,\bSigma)$, and let $M$ be defined as in \eqref{mvnSelection}. Then for all $j\in M$:
$$
E_\mu\left(Y^{i}_j(\mu) - \E_{\mu}(Y_j | M)\right)^2 \leq 
\frac{\tr(\bSigma)}{P\left(\bigcap_{j\notin M}\{l_j < y_j < u_j\}\right) \prod_{j\in M}\Phi(l_j; u_j, \sigma^{2}_{j, -j})}.
$$
The algorithm described in \eqref{mvtsteps} converges to the Z-estimator given by the root of the function
\begin{equation}\label{mvtzest}
\psi(\mu)_j = 
\begin{cases}
\bSigma_{j, .}^{-1}\left(y_j - \E_\mu(Y_j|M) \right)& \text{if} \ j\in M, \\
y_j - \mu_j &  \text{if} \ j \notin M.
\end{cases}
\end{equation}
\end{theorem}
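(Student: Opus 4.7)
The plan is to prove the two conclusions in sequence: first the variance bound on the Gibbs sample $y^i(\mu)$, then the convergence of the algorithm by invoking the standard stochastic approximation theory of \citet{bertsekas2000gradient}.

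For the variance bound, I start from the fact that each coordinate of $y^i(\mu)$ is drawn from a univariate truncated normal full conditional. Using the elementary identity $\E[(Y-\mu)^2 \mid Y \in A] = \E[(Y-\mu)^2 \mathbf{1}_A]/P(Y\in A) \leq \sigma^2/P(Y\in A)$ for a Gaussian $Y\sim N(\mu,\sigma^2)$, I bound the contribution of each coordinate's Gibbs step by the unconditional variance $\Sigma_{jj}$ divided by the probability of the truncation region associated with that step. Summing over coordinates (and using that $\sum_j \Sigma_{jj} = \tr(\Sigma)$) produces the $\tr(\Sigma)$ in the numerator. The denominator then arises by pessimistically bounding each univariate truncation probability by the joint probability of the full selection event: for coordinates $j\notin M$ the relevant region has probability at least $P(\bigcap_{j\notin M}\{l_j<y_j<u_j\})$, while for $j\in M$ it is at least $\Phi(l_j;u_j,\sigma_{j,-j}^2)$. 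Taking a product over $j$ and maximizing yields the stated bound, which in particular implies $\E_\mu\|y^i(\mu)-\E_\mu(y\mid M)\|^2 \leq A(1+\|y-\E_\mu(y\mid M)\|^2)$ for some finite constant $A$ depending only on $\Sigma$ and the truncation thresholds.

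For the convergence conclusion, I interpret the update \eqref{mvtsteps} as a stochastic approximation recursion of the form $\mu^i = \mu^{i-1} + \gamma_i g(\mu^{i-1}) + \gamma_i \xi^i$, where $g(\mu) = \E_\mu[\psi(\mu)]$ is the population vector field and $\xi^i$ is a mean-zero noise term driven by $y^i(\mu^{i-1}) - \E_{\mu^{i-1}}(y\mid M)$. By Lemma \ref{scoreLemma}, $g(\mu)$ coincides with the (plug-in) conditional score, whose zeros are precisely the roots of $\psi$ in \eqref{mvtzest}; the non-selected block is trivially handled because those coordinates are fixed at the plug-in value $y_j$, which is the unique root of $\psi_j$ in the corresponding coordinate. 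The step-size conditions \eqref{gammaconditions} are hypotheses of the theorem, and the variance bound from the previous paragraph verifies the noise condition required by \citet{bertsekas2000gradient}. Applying their convergence theorem delivers the claim.

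The main obstacle I anticipate is the variance bound: the Gibbs sampler introduces dependence between the coordinates of $y^i(\mu)$ through the sequence of conditional draws, so the naive coordinate-by-coordinate argument must be handled carefully to see that the worst-case truncation probability in any single Gibbs step can indeed be lower-bounded by the joint quantity appearing in the denominator. A secondary, lighter obstacle is verifying uniqueness (or at least stability) of the root of $\psi$ so that the Bertsekas--Tsitsiklis theorem applies without ambiguity; this should follow from strict concavity of the conditional log-likelihood on the selected coordinates, which inherits concavity from the exponential family structure noted in Lemma~\ref{scoreLemma}.
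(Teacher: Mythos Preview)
Your high-level strategy matches the paper's: verify the noise-variance hypothesis of \citet{bertsekas2000gradient} and deduce convergence of the iteration \eqref{mvtsteps}. The convergence portion of your proposal is essentially what the paper does (with the same caveat the paper makes explicitly: $y^i(\mu)$ is treated as an \emph{exact} draw from the truncated multivariate normal, not a finite-chain Gibbs output, so the worry you flag about Gibbs-induced dependence across coordinates is not actually an obstacle for the theorem as stated).

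The variance bound, however, diverges from the paper in a way that leaves gaps. First, the theorem bounds a \emph{single} coordinate $j\in M$; the $\tr(\Sigma)$ does not arise from summing over coordinates. The paper obtains $\E_\mu[(y_j-\E_\mu(y_j\mid M))^2\mid M]\le \Sigma_{jj}/P_\mu(M)$ and then simply uses $\Sigma_{jj}\le\tr(\Sigma)$. Second, your route through ``each coordinate's Gibbs step'' does not give what you need: under the joint truncated normal the $j$th \emph{marginal} is not a truncated univariate normal, so the identity $\E[(Y-\mu)^2\mid Y\in A]\le \sigma^2/P(Y\in A)$ applied to the full conditional does not directly bound the marginal conditional variance. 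The paper instead works with the joint event $M$: it writes
\[
\E_\mu\big[(y_j-\mu_j)^2\mid M\big]=\frac{1}{P_\mu(M)}\int_M (y_j-\mu_j)^2 f(y)\,dy,
\]
integrates out $y_{-j}$ to produce a factor $C(y_j)=\int_M f(y_{-j}\mid y_j)\,dy_{-j}\le 1$, and drops the indicator to get $\Sigma_{jj}/P_\mu(M)$.

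For the denominator, your ``product over $j$'' is the right shape, but the mechanism is a chain-rule factorization of $P_\mu(M)$ rather than a bound on individual Gibbs-step probabilities. Writing $P_\mu(M)=\int_M f(y_1\mid y_{-1})\cdots f(y_p)\,dy$, for each $j\in M$ the integral $\int_M f(y_j\mid y_{-j})\,dy_j$ equals $1-\Phi(u_j;\mu_{j,-j},\sigma_{j,-j}^2)+\Phi(l_j;\mu_{j,-j},\sigma_{j,-j}^2)$, which is minimized when the conditional mean sits at the midpoint and is therefore bounded below by $\Phi(l_j;u_j,\sigma_{j,-j}^2)$ uniformly in $y_{-j}$. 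Peeling off the selected coordinates in this way leaves exactly $P\big(\bigcap_{j\notin M}\{l_j<y_j<u_j\}\big)$, which is where the non-selected block enters---and this is precisely why the plug-in $\mu_j=y_j$ for $j\notin M$ matters, since that factor is then a fixed positive constant independent of the parameters being optimized.
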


%%%%% MVT SGD Algorithm

A precise description of the optimization algorithm is given in
Algorithm 1 in the appendix.   Figure
\ref{algConverge} shows typical optimization paths for Algorithm
1 as well as the stochastic gradient method for the Lasso described in 
Section \ref{sec:lassoMLE}.

%%%%%%%%%% Convergence Figure

\begin{figure}[h!]
\begin{center}
\includegraphics[width= 5.2 in ]{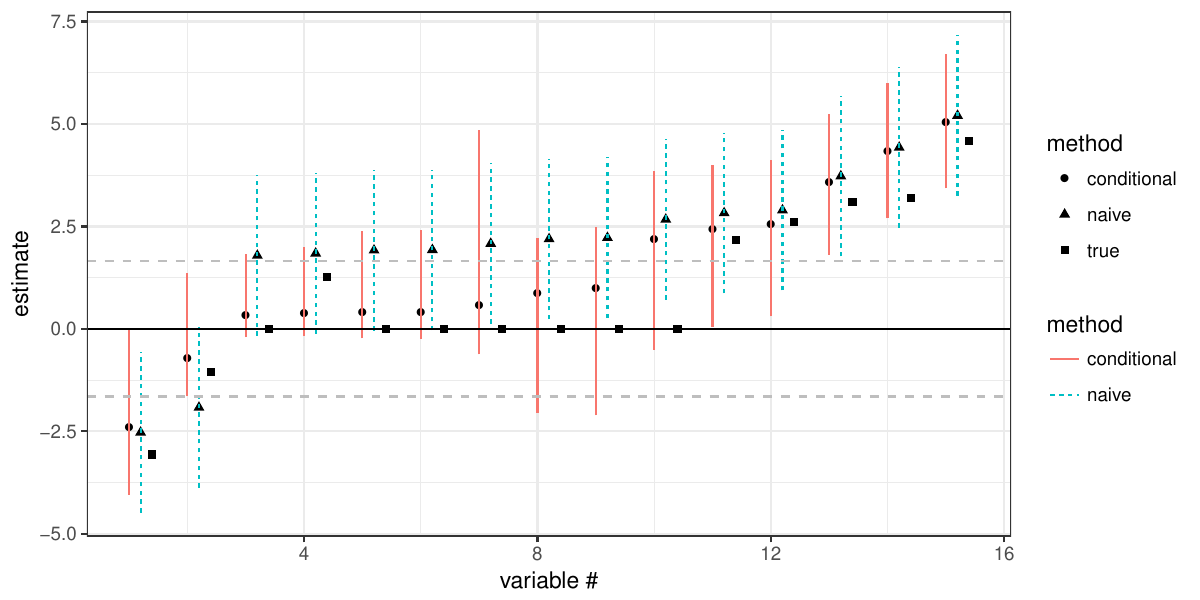} 
\caption[]{Post-selection estimates and confidence intervals for the
  normal means problem. In this example, $15$ means were selected for
  a threshold of $1.65$. The observed values are marked by triangles and the conditional estimators are marked by circles. Naive confidence intervals are marked by a dashed blue line and Conditional-Wald confidence intervals are marked by solid red lines. The true values of the parameter are shown as squares.}
\label{normalEstCI}
\end{center}
\end{figure}

\subsection{Conditional Confidence Intervals} \label{sub:normCI} 

In the absence of model selection, the MLE is typically asymptotically
normal, and it is common practice to construct Wald confidence
intervals based on this limiting distribution:
\begin{equation}\label{naiveNormalCI}
\hat{\mu}^{\text{naive}} = y, \qquad
\CI^{\text{naive}}_j = (\hat\mu^{\text{naive}}_j - z_{j,1-\alpha/2}, \hat\mu^{\text{naive}}_j - z_{j,\alpha/2}),
\end{equation}
where $z_{j,\alpha}$ denotes the $(1-\alpha)$ quantile of the
asymptotic normal distribution for the $j$th coordinate.  The
post-selection setting is more complicated, however, because we  can no longer rely the asymptotic normality of the estimators. Instead, we propose to construct confidence intervals based on the second order Taylor expansion of the conditional likelihood. 

In order to describe our proposed approximation to the distribution of the conditional
MLE, we extend the normal means problem to the setting of an
$n$-sample.  So assume that instead of observing a single vector
$y\in \mathbb{R}^{p}$, we have a set of observations
$y_1,\dots,y_n\in\mathbb{R}^p$ and perform model selection and
inference based on $\bar{y}_n = n^{-1} \sum_{i=1}^{n} y_i$. Our confidence intervals are based on the approximation
\begin{equation}\label{asympNormalDist}
\sqrt{n}(\hat{\mu}^M_n - \mu^M_0) \approx 
\sqrt{n}\text{Var}_{\mu^M_0}\left(\sqrt{n}\bSigma^{-1}\bar{Y}_n\right|M)^{-1}
\bSigma^{-1}\left(\bar{y}_n - \E_{\mu^{M}_0}(\bar{Y}_n|M)\right).
\end{equation}

Based on this approximation, we construct confidence intervals
\begin{equation}\label{normalCI}
\hat{\CI}_j = \left(\hat\mu^{M}_{j,n} - \hat\TN_{j, 1-\alpha/2} / \sqrt{n}, \; \hat\mu^{M}_{j,n} - \hat\TN_{j, \alpha/2} / \sqrt{n}\right).
\end{equation}
Here, $\hat\TN$ stands for the conditional distribution given selection
of
\begin{equation}\label{normalCIasy}
\text{Var}_{\hat\mu^M}\left(\sqrt{n}\bSigma^{-1}\bar{Y}\right|M)^{-1}
\bSigma^{-1}\sqrt{n}\left(\bar{y}_n - \E_{\hat\mu^{M}}(\bar{Y}_n|M)\right).
\end{equation}
We estimate the quantiles $\hat\TN_{j, 1-\alpha/2}$ and $\hat\TN_{j,
  \alpha/2}$
using empirical quantiles of samples from the truncated normal
distribution.  While we are unable to provide theoretical justification for these confidence intervals, a comprehensive simulation study reveals that they obtain coverage rates that are significantly better than those of the naive confidence intervals, and are surprisingly close to the desired level (Section \ref{simStudy}).

\begin{example}
  Figure \ref{normalEstCI} shows point estimates and confidence
  intervals for selected means in a normal means problems.  The figure
  was generated by sampling $Y\sim N(\mu,\Sigma)$ with
  $\mu_{1},\dots,\mu_{20}\sim N(0,4)$ i.i.d., $\mu_{21}=\dots=\mu_{100} =0$
  and $\Sigma_{i,j} = 0.3 I_{i \neq j} + 1 I_{i = j}$.  The applied
  selection rule was $S(y)=\{j: \; |y_j| > 1.65\}$.  The plotted
  estimates are the conditional estimates computed using the algorithm defined by \eqref{mvtsteps} along with the $95\%$ confidence intervals described in \eqref{normalCI}.  
  In addition, we plot the estimates
  and confidence intervals described in \eqref{naiveNormalCI} which we term 
  \emph{naive}. These were not adjusted for selection.
  
  As we had seen in the univariate case, the conditional estimator
  acts as an adaptive shrinkage estimator. When the observed value is
  far away from the threshold, then no shrinkage is performed and when
  it is relatively close to the threshold then it is shrunk towards
  zero. 
\end{example}

\section{Maximum Likelihood Estimation for the Lasso}\label{sec:lassoMLE}
In this section we demonstrate how the ideas from the previous
section can be adapted for computing the post-selection MLE in linear
regression models selected by the Lasso. The Lasso estimator minimizes
the squared error loss augmented by an $\ell_1$ penalty,
\begin{equation*}
\hat\beta_{\text{Lasso}} = \arg\min_\beta\frac{1}{2}\|y - \bX\beta\|^{2}_2 + \lambda\|\beta\|_1
\end{equation*}
with $\lambda\geq 0$ being a tuning parameter.  Model selection results
from the fact that the $\ell_1$ penalty may shrink a subset of the
regression coefficients to zero.  As in \citet{Lee16}, we
perform inference on the non-zero regression coefficients in the Lasso
solution, that is, the selection procedure is $
S(y)\;  =\; \{j: \; \hat\beta_{\text{Lasso},j} \neq 0 \}
$.

Given selection of a model $M$, we are interested in estimating the
unconditional mean of the regression coefficients
$\beta = (\bX^{T}_M\bX_M)^{-1}\bX^{T}_M\E(Y)$.  We begin by describing the
Lasso selection event (Section \ref{sec:lassoSelect}) and then give a
Metropolis-Hastings sampler for the post-selection
distribution of the least-squares estimates
(Section \ref{sec:lassoMH}).  In Section \ref{sec:LassoSGD}, we
describe a practical stochastic ascent algorithm for estimation after
model selection with the Lasso.

\subsection{The Lasso Selection Event}\label{sec:lassoSelect}
Let $M\subseteq\{1,\dots,p\}$ be a given model.  In order to
develop a sampling algorithm for a normal distribution truncated to
the event that $S(\bX, y):= \{j: \, \hat\beta_{\text{Lasso},j} \neq 0 \}=M$, we
invoke the work of \citet{Lee16} who provide a useful
characterization of this Lasso selection event.  Let $s \in \{-1, 1\}^{|M|}$ be the vector
of signs of $\hat\beta_{\text{Lasso}}$ over the active set.  We will consider two sets
\begin{equation}
  \label{oneCondition}
  A_1(M,s) := \left\{\bA_1(M,s) y < u_1(M,s) \right\}, 
\end{equation}
\begin{equation} \label{zeroCondition}
  A_0(M,s) := \left\{l_0(M,s) < \bA_0(M)y < u_0(M,s)\right\},
\end{equation}
where in the first event 
\begin{equation}\label{refittedLSE}
\bA_1(M, s) = -\diag(s) (\bX^{T}_M \bX_M)^{-1} \bX_M^{T}, \quad
u_1(M, s) = -\lambda\diag(s)(\bX_M^{T} \bX_M)^{-1} s,
\end{equation}
and in the second event
\begin{gather}\label{fullLassoset}
  \bA_0(M)  = \frac{1}{\lambda} \bX^{T}_{-M} (I -
  \bX_M(\bX_M^{T}\bX_M)^{-1}\bX_M^{T}) , \\
  \nonumber
  l_0(M, s) = -\mathbf{1} -
  \bX_{-M}^{T}\bX_M(\bX^{T}_M\bX_M)^{-1} s, \;\;\;\;\;\; u_0(M, s) = \mathbf{1}
  - \bX_{-M}^{T}\bX_M(\bX^{T}_M\bX_M)^{-1} s.
\end{gather}
Here, $\bX_M$ is the submatrix of the design matrix $\bX$ made up of the
columns indexed by the selected model $M$ and the columns in the submatrix $\bX_{-M}$ correspond to variables which were not selected. It can be shown that 
\begin{equation}\label{MLassoDefinition}
\{S(\bX, Y) = M\;\text{and sign vector equal to} \ s\} \;=\; A_0(M,s) \cap A_1(M,s).
\end{equation}

Suppose that $Y \sim (\bX \beta, \sigma^{2} \bI)$, then conditional score function for a model selected by the Lasso is given by
\begin{align*}
\sigma^{2} \frac{\partial}{\partial \beta} \log \mathcal{L}(\beta | M) &=
\bX_M^{T}y  - \E(\bX^{T}_MY|M) = \bX^{T}_My - \frac{\sum_{s} P(M, s) \E\left(\bX^{T}_M Y |A_1(M,s)\right)}
{\sum_s P(M,s)},
\end{align*}
where for a given set of signs $P(M, s) = P\left(A_0(M,s)\right)\times P\left(A_1(M,s)\right)$.

As in the normal means problem, parameters related to the set of
variables excluded from the model play a role in the conditional
likelihood. In the normal means problem we advocated excluding those
from the optimization of the conditional likelihood.  For the Lasso, we
similarly must compute a conditional expectation which is a function
of $\bA_0(M) \E(Y)$.  We again advocate for avoiding conditional
likelihood-based estimation of this quantity.  In computational
experiments we observed that the value of $\bA_0(M) \E(Y)$ tends to be
very small and rather well approximated by a vector of zeros. For
more on this and some numerical examples see Appendix B. 

In the next subsection, we devise an algorithm for sampling from the
post-selection distribution of the regression coefficients selected by
the Lasso without conditioning on the sign vector $s$.  The sampler
will operate by updating the two quantities
$$
\eta := (\bX^{T}_M \bX_M)^{-1} \bX^{T}_M y, \qquad
\xi := \frac{1}{\lambda} \bX_{-M}\left(I - \bX_M(\bX^{T}_M \bX_M)^{-1} \bX^T_M\right) y.
$$

\subsection{Sampling from the Lasso Post-Selection Distribution}\label{sec:lassoMH}

With a view towards Gibbs sampling, we examine the region where a
single regression coefficient may lie given the signs of all other
coefficients.  Let $j\in M$ be an arbitrary index.  
Denote by $s^{+j}$ and $s^{-j}$ vectors of signs where the signs for all
coordinates but $j$ are held constant and the $j$th coordinates are
set to either $1$ or $-1$, respectively.  A necessary condition for the
selection of $M$ is that
$\eta_j \leq \lambda(\bX^{T}_M\bX_M)^{-1}_{j,.}s^{-j}$ or $\eta_j \geq \lambda(\bX^{T}_M\bX_M)^{-1}_{j,.} s^{+j}$.
Ideally, we would be able to implement a Gibbs sampler that allows for the change of signs as we have done in Section \ref{sub:truncnormsamp} by setting 

\begin{equation}\label{upperlowerdef}
l_j = \lambda(\bX^{T}_M\bX_M)^{-1}_{j,.}s^{-j}, \;\;\;\;\;
u_j = \lambda(\bX^{T}_M\bX_M)^{-1}_{j,.} s^{+j}.
\end{equation}
However, an important way in which the Lasso selection event differs from the one described in Section \ref{sec:selectedMeans} is that when a single coordinate of $s$ is changed, the thresholds for all other variables change. Thus, in order for a single coordinate of $\eta$ to change its sign, all other variables must be in positions that allow for that. 

In order to explore the entire sample space (and sign combinations) we
propose a delayed rejection Metropolis-Hastings algorithm
\citep{tierney1999some, mira2001metropolis}. The algorithm works by
attempting to take a Gibbs step for each selected variable in turn.  If the proposed Gibbs step for the $j$th variable satisfies the constraints induced by the selection event then the proposal is accepted. Otherwise, we keep the proposal for the $j$th variable and make a global proposal for all selected variables keeping their signs fixed. We use the notation: 
$$
\eta \sim N_p(\beta, \bSigma_1), 
\qquad \beta = (\bX_M^{T}\bX_M)^{-1}\bX_M^{T}\E(Y), \qquad 
\bSigma_1 = \sigma^{2} (\bX^{T}_M \bX_M)^{-1},
$$$$
\xi \sim N(0, \bSigma_0), 
\qquad \bSigma_0 = \sigma^{2}\bA_0(M) \bA_0(M)^{T}.
$$

At some arbitrary iteration $t$, our  sampler first makes the draw
\begin{equation}
\xi^{t} \sim f(\xi |M, \eta^{t-1}).\label{eq:xit}
\end{equation}
This sampling task is quite simple in the sense that $\xi |M, \eta$ has a multivariate normal distribution constrained to a convex set.
Next, we make a proposal for each selected variable. For the $j$th selected variable we sample:
$$
r_j \sim f\left(\eta_j \middle| \{\eta_j < l_j\}\cup \{u_j < \eta_j\},\eta_1^{t},\dots,\eta_{j-1}^{t},\eta_{j+1}^{t-1},\dots,\eta^{t-1}_{p}\right),
$$
where $l_j$ and $u_j$ are as defined in \eqref{upperlowerdef}. If the
sign of $r_j$ differs from the sign of $\eta_j^{t-1}$, then we must
verify that $\xi^{t}$ from \eqref{eq:xit} satisfies the constraints imposed by the new set
of signs. If the constraints described in
(\ref{zeroCondition})
 are not satisfied, then the proposal is rejected.  If the proposal
 yields a point that satisfies both \eqref{zeroCondition} and 
(\ref{oneCondition})
then no further adjustment is necessary and the
acceptance probability is $1$ because the proposal is full conditional
\citep{Chib1995understanding}. On the other hand, if the proposed
point is not in the set from \eqref{oneCondition},  then 
a sign change has been performed and 
we must update the values for other coordinates. 

Denote by $\text{TN}(a, b, \mu, \sigma^{2})$ a univariate normal distribution with mean $\mu$ and variance $\sigma^{2}$ constrained to the interval $(a,b)$. For all variables $k \neq j$ we sample a proposal from the following distribution:
\begin{equation}\label{truncNormProposal}
r_k \sim \text{TN}(a_k, b_k, \eta_k^{t}, \sigma^{2}_{k,-k}),
\end{equation}
where $a_k=u_k$ and $b_k=\infty$ if $s_k^t=1$, and $a_k=-\infty$ and
$b_k=l_k$ if $s_k^t=-1$.
Note that in \eqref{truncNormProposal} $l_k$ and $u_k$ must be recomputed according to the proposed sign change. 

The Metropolis-Hastings algorithm in its entirety is described in
Algorithm \ref{LassoMH} in the Appendix. The following Lemma describes the transitions
of the proposed sampler.

\begin{lemma}\label{delayedRejection}
For the $j$th variable at the $t$th iteration define:
\begin{flalign*}
&r_{1}^{\rightarrow} =\left(\eta_{1}^{t},\dots,\eta^{t}_{j-1},r_j,\eta^{t-1}_{j+1},\dots,\eta^{t-1}_{p}, \xi^t \right), \quad
r_{2}^{\rightarrow} = \left(r_1,\dots,r_{j-1},r_j,r_{j+1},\dots,r_p, \xi^t \right), 
\\
&r_1^{\leftarrow} =  \left(r_1,\dots,r_{j-1},\eta^{t-1}_j,r_{j+1},\dots,r_p, \xi^t \right), \quad
r_2^{\leftarrow} = \left(\eta_{1}^{t},\dots,\eta^{t}_{j-1},\eta_{j}^{t-1},\eta_{j+1}^{t-1},\dots,\eta^{t-1}_{p}, \xi^t \right).
\end{flalign*}
Here, $r_2^{\leftarrow}$ represents the current state of the sampler
after the Gibbs step from~(\ref{eq:xit}).
If $\xi^{t}$ from \eqref{eq:xit} is not in the set from 
(\ref{zeroCondition}), %% \eqref{fullLassoset} 
then 
the proposal for $r_j$ is rejected and the sampler stays in state
$r_2^{\leftarrow}$.  If $\xi^{t}$ is in (\ref{zeroCondition})
%% \eqref{fullLassoset}
 and $r_1^{\rightarrow}$ is in the set from \eqref{oneCondition} then the sampler moves to $r^{\rightarrow}_1$.  Otherwise, if
$r_1^{\leftarrow}$ is in the set from \eqref{oneCondition} then the
sampler stays in state $r_2^{\leftarrow}$. Finally if neither
$r^{\rightarrow}_1$ nor $r^{\leftarrow}_1$ are in the set from
\eqref{oneCondition} then the 
sampler either moves to $r_{2}^{\rightarrow}$ or stays put at
$r_2^{\leftarrow}$.  In this case, the move to $r_{2}^{\rightarrow}$
occurs with probability
\begin{equation}\label{MHtransition}
p^t_j = \min \left(\frac{\varphi(r_2^{\rightarrow} ; \beta, \bSigma)}{\varphi(r^{\leftarrow}_2; \beta, \bSigma)} 
\frac{q(r^{\rightarrow}_2,r^{\leftarrow}_2)}{q(r^{\leftarrow}_2,r^{\rightarrow}_2)}
, 1\right),
\end{equation}
where
$$
q(x, y) =  f\left(y_j \middle| \{Y_j < l_j\} \cup \{u_j < Y_j\},x_{-j}\right)
\prod_{k\neq j} \frac{\varphi(y_k ; x_k, \sigma^{2}_{k,-k})}{P(Y_k \in (a_k, b_k); x_k, \sigma^{2}_{k,-k})}.
$$  
\end{lemma}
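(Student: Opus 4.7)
The plan is to verify each of the transition rules by a case analysis, relying on the delayed-rejection framework of \citet{tierney1999some} and \citet{mira2001metropolis}. The target distribution on $(\eta, \xi)$ is the normal density restricted to the selection event $\bA_0(M,s) \cap \bA_1(M,s)$, where $s$ is now a function of $\eta$ (the sign vector of $\eta$ over $M$). Any proposed move whose endpoint lies outside this support must be rejected, which immediately handles the first claim: if $\xi^t$ from \eqref{eq:xit} fails the (\ref{zeroCondition}) constraint under the sign vector induced by the (partially updated) $\eta$, the target density at the proposal is zero and the chain stays at $r_2^{\leftarrow}$.

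Next I would handle the Gibbs-type acceptance. The proposal $r_j$ is drawn from the full conditional of $\eta_j$ given $M$ and the remaining coordinates, restricted to $\{\eta_j < l_j\} \cup \{u_j < \eta_j\}$. When this draw either preserves the sign of $\eta_j$ or yields a state $r_1^{\rightarrow}$ that still satisfies \eqref{oneCondition}, the ratio $\pi(r_1^{\rightarrow})/\pi(r_2^{\leftarrow})$ exactly cancels the ratio of proposal densities (this is the standard observation of \citet{Chib1995understanding} that full-conditional proposals have acceptance probability one). Hence the chain moves deterministically to $r_1^{\rightarrow}$ in this case.

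The third case is precisely the delayed-rejection bookkeeping step. In the Tierney--Mira construction, a stage-2 proposal may only be made when both the forward stage-1 proposal from $r_2^{\leftarrow}$ and the reverse stage-1 proposal from $r_1^{\rightarrow}$ fail, so that the factors $1-\alpha_1$ in their acceptance probability do not vanish. The move back from $r_2^{\rightarrow}$ would first roll $r_j$ back, arriving at $r_1^{\leftarrow}$; if $r_1^{\leftarrow} \in \bA_1$, then this reverse stage-1 move would be accepted (by the argument of the previous paragraph), so the reverse path to $r_2^{\leftarrow}$ is already available through stage 1 and no stage-2 proposal is made --- the sampler stays put. This gives the third claim.

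Finally, for the fourth case both stage-1 directions fail, so we invoke the delayed-rejection acceptance probability
\begin{equation*}
\alpha_2(x,y) = \min\!\left(1,\; \frac{\pi(y)\, q_1(y,\cdot)[1-\alpha_1(y,\cdot)]\, q_2(y,x)}{\pi(x)\, q_1(x,\cdot)[1-\alpha_1(x,\cdot)]\, q_2(x,y)}\right).
\end{equation*}
Because the stage-1 rejections are deterministic in both directions, the two $[1-\alpha_1]$ factors equal one and cancel. The $q_1$ factors contribute the first factor of $q(x,y)$ (the truncated-conditional density of $r_j$), while the $q_2$ factors contribute the product over $k\neq j$ of the truncated normal densities from \eqref{truncNormProposal}, with the intervals $(a_k,b_k)$ recomputed under the sign vector of the endpoint. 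Substituting $x = r_2^{\leftarrow}$ and $y = r_2^{\rightarrow}$ and noting that both endpoints lie in the same selection region, so that the normalizing constant $P(M)$ cancels from the $\pi$-ratio, leaves the $\varphi$-ratio displayed in \eqref{MHtransition}. The main obstacle will be careful bookkeeping of the truncation intervals under the sign change --- in particular, making sure the $q$-ratio correctly reflects the fact that the $(a_k,b_k)$ used in the forward proposal (determined by the new signs) and in the reverse proposal (determined by the old signs) differ, and verifying that the $\xi^t$ update from \eqref{eq:xit} does not require an extra adjustment because it already preserves the conditional $f(\xi\mid M,\eta)$ against which reversibility is checked.
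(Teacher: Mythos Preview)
Your proposal is correct and follows essentially the same approach as the paper: both arguments cast the sampler as a two--stage delayed--rejection Metropolis--Hastings scheme in the sense of \citet{mira2001metropolis}, observe that the stage--1 full--conditional proposal has acceptance probability equal to the indicator $I\{r_1^{\rightarrow}\in\bA_1\}$, and then simplify the stage--2 acceptance ratio by noting that the $(1-\alpha_1)$ factors are either $0$ (when $r_1^{\leftarrow}\in\bA_1$, forcing the chain to stay) or $1$ (when neither $r_1^{\rightarrow}$ nor $r_1^{\leftarrow}$ is legal), leaving exactly the $\varphi$-- and $q$--ratios in \eqref{MHtransition}. Your additional remarks about bookkeeping the sign--dependent truncation intervals and about the $\xi^t$ Gibbs step not requiring correction are well taken and make explicit points the paper leaves implicit.
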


\subsection{A Stochastic Ascent Algorithm for the Lasso}\label{sec:LassoSGD}

We now propose an algorithm for computing the post-selection MLE when
the model is selected via Lasso. We begin by defining the gradient
ascent step, which uses samples from the post-selection distribution
of the refitted regression coefficients.  We give a convergence
statement for the resulting algorithm, and we discuss practical
implementation for which we address variance estimation and imposing sign constraints.

Let $M=S(y)$ be the Lasso-selected model.  Given a sample
$\eta^i \sim f_{\hat\beta^{i-1}}(\eta|M)$ from the post-selection
distribution of the least squares estimator, we take steps of the
form:
\begin{equation}\label{Lassosteps}
\hat\beta^{i} = \hat\beta^{i-1} + \gamma_i\left( \bX_M^{T}y - (\bX^{T}_M \bX_M)\eta^{i} \right),
\end{equation}
where the $\gamma_i$ satisfy the conditions from
\eqref{gammaconditions}.  In Theorem \ref{LassoSGDconvergence} we give
a convergence statement for the algorithm defined by
\eqref{Lassosteps}. As in Theorem \ref{thm:nmeans}, the main challenge
is bounding the variance of the stochastic gradient steps.

\begin{theorem}\label{LassoSGDconvergence}
  Let $\eta$ follow the conditional distribution of
  $\eta \sim N(\beta, \bSigma)$ given the Lasso selection $S(y)=M$.
% , and let $M=S(y)$ be the
%   Lasso-selected model. 
Then there exists a constant $A$ such that for
  all $\beta \in \mathbb{R}^{p}$:
$$
\E_\beta\left\|(\bX^{T}_M\bX_M)\eta - \bX_M^{T}\E_\beta(Y|M)\right\|^{2}_2 \leq A.
$$
Furthermore, the sequence $(\hat\beta^i)$ from~(\ref{Lassosteps})
converges, and its limit
$\hat\beta^{\infty} := \lim_{i\rightarrow\infty} \hat\beta^{i}$ satisfies $\psi(\hat\beta^\infty) := \bX_M^{T}y - \E_{\hat\beta^\infty}(\bX^{T}_M Y|M) = 0$.
\end{theorem}

Before we exemplify the behavior of the proposed algorithm we first
discuss some technicalities. The sampling algorithm proposed in the
previous section assumes knowledge of the residual standard error $\sigma$,
a quantity that in practice must be estimated from the data. We find that
the cross-validated Lasso variance estimate recommended by \citet{reid2013study} works well for our purposes.

As in the univariate normal case, the post-selection estimator for the
Lasso performs adaptive shrinkage on the refitted regression
coefficients.  However, the asymmetry between the thresholds dictated by different sign sets   
may cause the sign of the conditional coefficient estimate to be different than the one inferred by the Lasso. Empirically we have found some benefit for
constraining the signs of the estimated coefficients to those of the
refitted least-squares coefficient estimates.

\begin{figure}[t]\label{lassoExampleFig}
\includegraphics[width=5.5 in]{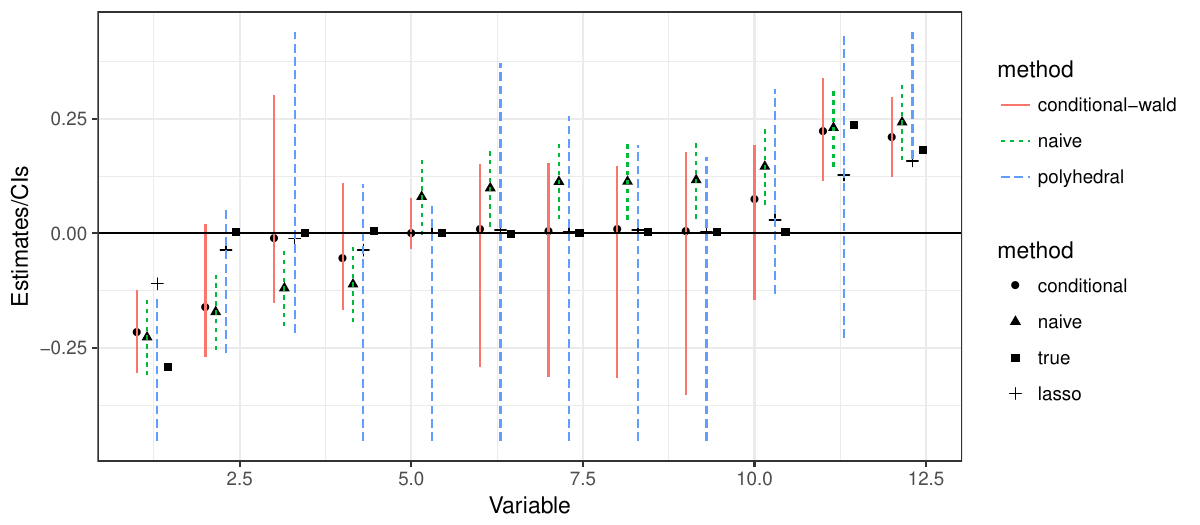}
\caption[]{Post-selection estimates and confidence intervals for the
  Lasso.  For simulated data, we plot the 
  conditional MLE (circles), refitted least-squares estimates
  (triangles) and Lasso estimates (squares). The true coefficient
  values are marked by plus signs.  We also plot three types of
  confidence intervals, Conditional-Wald (solid red line),
  Refitted-Wald (dashed green line) and Polyhedral confidence
  intervals (dashed blue lines).}
\label{lassoExampleFig}
\end{figure}

\begin{example}\label{LassoExample}
  We illustrate the proposed method via simulated data
  that are generated as follows.  We form a matrix of covariates by
  sampling $n$ rows independently from $N_p(0,\bSigma)$ with
  $\bSigma_{i,j} = \rho^{|i-j|}$.  We then generate a coefficient
  vector $\beta$ by sampling $k$ coordinates from the
  $\mathit{Laplace}(1)$ 
distribution and setting the rest to zero.
  Next, we sample a response vector $Y \sim N(\mu, \sigma^{2} \bI)$,
  where $\mu = \bX\beta$ and $\sigma^{2}$ is chosen to obtain a certain
  signal-to-noise ratio defined as $\snr := \Var(\mu)/\sigma^{2}$.  We
  set $n = 400$, $p = 1000$, $k = 5$, $\rho = 0.3$ and $\snr = 0.2$.

  Given a simulated dataset we select a model using the Lasso as
  implemented in the R package `glmnet'
  \citep{friedman2010regularization}.  Following common practice and
  the default of the package, the tuning parameter $\lambda$ is
  selected via cross-validation.  Strictly speaking, this
  yields another post-selection problem.

  In Figure \ref{lassoExampleFig} we plot three types of estimates for
  the regression coefficients selected by the Lasso. The conditional
  estimator proposed here, the refitted least-squares estimates and the
  Lasso estimates. In addition to the point estimates, we also plot
  three types of confidence intervals. The first are the
  Conditional-Wald confidence intervals analogous to the ones
  described in Section \ref{sub:normCI}.  They are given by:
  \begin{gather*}
 \hat{\CI}_j = \left(\hat\beta^{M}_{j,n} - \hat\TN_{j, 1-\alpha/2} /
   \sqrt{n}, \; \hat\beta^{M}_{j,n} - \hat\TN_{j, \alpha/2} /
   \sqrt{n}\right),
 \\
 \hat{\TN} =^D  \sigma^{2}
 \Var_{\hat\beta_n^M}\left(n^{-0.5} \bX^{T}_{M}Y\middle|M\right)^{-1}
  n^{-0.5}\left(\bX^{T}_My - \E_{\hat\beta^{M}_n}(\bX^{T}_M Y|M)\right).
\end{gather*}
The second intervals are the
Refitted-Wald confidence intervals obtained from fitting a linear
regression model to the selected covariates without accounting for
selection. Finally, we also include the intervals of
\citet{Lee16} as implemented in the R package
`selectiveInference' \citep{tibshirani2016selectiveInference}.  We
term these \emph{Polyhedral} confidence intervals.

  In Figure \ref{lassoExampleFig}, black circles mark the conditional
  estimates, triangles the refitted least squares estimates, squares the
  lasso estimates and plus signs the true coefficient values. The
  conditional estimator tends to lie between the refitted and the lasso
  estimates.  When the refitted estimate is far from zero the
  conditional estimator applies very little shrinkage, and when the refitted
  estimator is closer to zero the conditional estimator is shrunk towards
  the lasso estimate.  The conditional confidence intervals also
  exhibit a behavior that depends on the estimated magnitude of the
  regression coefficients.  When the conditional estimator is far from
  zero the size of the confidence intervals is similar to the size of
  the refitted confidence interval. When the conditional estimator is
  shrunk towards zero, its variance tends to be the smallest. The
  confidence intervals are the widest when the conditional estimator
  is just in-between the lasso and refitted estimates. The Polyhedral
  confidence intervals tend to be the largest in most cases.  Section
  \ref{simStudy} gives a more thorough
  examination of these estimates and confidence intervals.
\end{example}

\section{Asymptotics for Conditional Estimators}
\label{sec:asympt-cond-estim}

We now present asymptotic distribution theory that supports the
estimation method proposed in the previous sections.  Such theory is complicated
by the fact that model selection induces dependence between the
previously i.i.d.\ observations.  In Section \ref{sub:consistency} we
first give a consistency result for naive unconditional estimates,
which in particular justifies our plug-in likelihood method for the
normal means problem.  We then outline conditions under which the
conditional MLE is consistent for the parameters of interest in a general exponential family
setting.  In Section
\ref{sub:asymplasso} we adapt the theory to the Lasso post-selection
estimator.  
We remark that theory on the efficiency of
conditional estimators can be found in
\citet{Routenberg15}.  Proofs for this section are
deferred to the appendix.

\subsection{Theory for exponential families}\label{sub:consistency}

Suppose we have an i.i.d.~sequence of observations
$(Y_i)_{i=1}^\infty$ drawn from a distribution $f^*$.  As a base model
for the distribution of each observation $y_i$, consider a regular
exponential family $\{p_\theta:\theta\in\Theta\}$ with sufficient
statistic $T \in \mathbb{R}^{p}$ and natural parameter $\theta$.  So, $\Theta\subset\mathbb{R}^p$. For the sample
$y_1,\dots,y_n$, define $\bar{T}_n := n^{-1} \sum_{i=1}^{n} T(y_i)$.
Now, let $\mathcal{M}$ be a countable set of submodels, which we
denote by $M = \{p_{\theta^M} : \theta^M \in \Theta^M\}$ with
parameter space $\Theta_M\subset\Theta$.  We consider a model
selection procedure $S_n : \mathbb{R}^{p} \rightarrow \mathcal{M}$ that
selects a model $M$ as a function of $\bar T_n$.  Based on the true
distribution $f^*$ the sample is taken from, the selection procedure
$S_n$ induces a distribution $P_n(M):=P(S_n(\bar T_n) = M)$ over
$\mathcal{M}$.  We emphasize that $f^*$ need not belong to any model
in $\mathcal{M}$ nor the base family $\{p_\theta:\theta\in\Theta\}$.

\begin{example}
  In the normal means problem, $p_\theta$ is a normal distribution
  with mean vector $\theta$.  The sufficient statistic is
  $T(y)=\bSigma^{-1}y$, where $\bSigma$ is the known covariance matrix.
  Each model $M\in\mathcal{M}$ corresponds to a set of mean vectors
  with a subset of coordinates equal to zero.  The selection procedure
  $S_n$ is based on comparing the coordinates of $\bar T_n$ to
  predetermined thresholds $l_{j}$ and $u_j$,
  recall~(\ref{mvnSelection}). In an asymptotic setting $l_j$ and $u_j$ will 
  often scale with the sample size to obtain a pre-specified type-I error rate. 
\end{example}

We consider estimation of a parameter $\theta^M_0$ of a fixed model
$M$, which represents the model selected in the data analysis.  If the
data-generating distribution $f^*$ belongs to $M$, then
$f^*=p_{\theta^M_0}$ for a parameter value $\theta^M_0 \in \Theta^M$
and consistency can be understood as referring to the true
data-generating distribution.  If $f^*\not\in M$, then the parameter
in question corresponds to the distribution in $M$ that minimizes the
KL-divergence from $f^*$, so
$$
\theta^M_0 := \arg\inf_{\theta^M\in\Theta^M} -\E_{f^*} \left[\log p_{\theta^M}(Y) - \log f^*(Y) \right] =
\arg\sup_{\theta^M\in\Theta^M} \E_{f^*} \left[\ell_{\theta^M}(Y)\right].
$$
Note that even under model misspecification we have $
\E_{f^*}(\bar{T}_n) = \E_{\theta_0^M}(\bar{T}_n)
$
because $\theta^M_0$ is the solution to the expectation of the score
equation.

The post-selection setting is unusual in the sense that we are only
interested in a specific model $M$ if $S_n(\bar T_n) = M$.  Hence, it
only makes sense to analyze the asymptotic properties of an estimator
of $\theta^M_0$ if model $M$ is selected infinitely often as
$n\rightarrow \infty$.  This justifies our subsequent focus on
conditions that involve the probability of selecting $M$.

Our first result applies in particular to the normal means problem and
is concerned with the post-selection consistency of the
unconditional/naive MLE for $\theta^M_0$.

\begin{theorem}\label{thm:naiveconsistency}
  Let $M$ be a fixed model with $P_n(M)^{-1}e^{-\delta n} = o(1)$ for
  all $ \delta>0$.  Let
  $\tilde\theta_n^M=(\tilde\theta^M_{n,j})_{j=1}^p$ be an estimator
  that unconditionally is unbiased for $\theta^M_0$.  Suppose
  there is a constant $C\in(0,\infty)$ such that for all $1\le j\le p$
  and $n\ge 1$ the distribution of
  $\sqrt{n}(\tilde\theta^M_{n,j}-\theta^M_{0,j})$ is sub-Gaussian for
  parameter $C$.  Then $\tilde\theta^M_n$ is post-selection
  consistent, that is,
  $$
  \lim_{n\rightarrow \infty} P(\|\tilde\theta^M_n -
  \theta^M_0\|_{\infty} > \varepsilon \,|\,  S_n(\bar T_n) = M ) =0 \qquad \forall \varepsilon > 0.
  $$
\end{theorem}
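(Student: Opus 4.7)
The plan is to reduce the conditional consistency statement to an unconditional tail bound by writing the conditional probability as a ratio and then exploiting the tension between the exponential tail decay from sub-Gaussianity and the sub-exponential lower control on $P_n(M)$.

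First, I would write
\[
P\bigl(\|\tilde\theta^M_n - \theta^M_0\|_\infty > \varepsilon \,\big|\, S_n(\bar T_n)=M\bigr)
= \frac{P\bigl(\{\|\tilde\theta^M_n - \theta^M_0\|_\infty > \varepsilon\} \cap \{S_n(\bar T_n)=M\}\bigr)}{P_n(M)},
\]
and bound the numerator above by the unconditional probability $P(\|\tilde\theta^M_n - \theta^M_0\|_\infty > \varepsilon)$, discarding the intersection with the selection event.

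Next, I would use the sub-Gaussian hypothesis. For each coordinate $j$, sub-Gaussianity of $\sqrt{n}(\tilde\theta^M_{n,j}-\theta^M_{0,j})$ with parameter $C$ yields
\[
P\bigl(|\tilde\theta^M_{n,j} - \theta^M_{0,j}| > \varepsilon\bigr) \;\le\; 2\exp\!\Bigl(-\tfrac{n\varepsilon^2}{2C}\Bigr),
\]
and a union bound over the $p$ coordinates gives the unconditional bound $2p\exp(-n\varepsilon^2/(2C))$ on the numerator.

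Finally, I would pick any $\delta$ strictly smaller than $\varepsilon^2/(2C)$, say $\delta = \varepsilon^2/(4C)$, and factor
\[
\frac{2p\,\exp(-n\varepsilon^2/(2C))}{P_n(M)} \;=\; 2p\,\exp(-n\varepsilon^2/(4C)) \cdot \frac{\exp(-\delta n)}{P_n(M)}.
\]
The first factor tends to zero because $p$ is fixed and the exponential dominates; the second factor tends to zero by the hypothesis $P_n(M)^{-1}e^{-\delta n}=o(1)$. This gives the desired conditional convergence, completing the proof.

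There is essentially no technical obstacle here — the argument is a clean ratio-bound combined with a Chernoff-type estimate. The only modest subtlety is choosing $\delta$ strictly below $\varepsilon^2/(2C)$ so that after dividing by $P_n(M)$ a residual exponential decay remains to absorb the constant $2p$; this is why the hypothesis is quantified over \emph{all} $\delta>0$ rather than a single fixed one.
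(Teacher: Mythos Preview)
Your proposal is correct and follows essentially the same route as the paper: write the conditional probability as a ratio, drop the intersection to bound the numerator by the unconditional deviation probability, apply a union bound over coordinates together with the sub-Gaussian tail estimate, and then invoke the hypothesis on $P_n(M)$. Your explicit choice of $\delta<\varepsilon^2/(2C)$ and the resulting factorization make the final limit more transparent than the paper's somewhat terser concluding step, but the argument is the same.
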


Next, we turn to the conditional MLE.  Let $\ell_{\theta^M}(y_i)$ be
the log-likelihood of $y_i$ as a function of $\theta^M$, and let
$P_{n,\theta^M}(M)$ be the probability of $\{S_n(\bar T_n) = M\}$ where
$y_1,\dots,y_n$ is an i.i.d.~sample from $p_{\theta^M}$.  Then the
conditional MLE is
$$
\hat{\theta}^M_n = \arg\max_{\theta^M \in \Theta^M} \left( \frac{1}{n} \sum_{i=1}^{n}  \ell_{\theta^M}(y_i) \right) - \frac{1}{n}\log P_{n,\theta^M}(M).
$$
We now give conditions for its post-selection consistency.

\begin{theorem}\label{thm:consistency}
  Suppose the fixed model $M$ satisfies 
  \begin{gather}\label{true-o(n)}
    P_{n}(M)^{-1} = o(n),\\
  \label{exp-inf-prob}
\lim_{n\rightarrow\infty}\inf_{\theta^M} P_{n,\theta^M}(M)e^{n} =\infty.
\end{gather}
Furthermore, suppose that for a sufficiently small ball $U\subset \Theta $ centered at $\theta^M_0$
\begin{equation}\label{newSup}
\sup_{\theta^M\in U(\theta^M_0)} P_{n,\theta^M}(M)^{-1} = o(n).
\end{equation}
Then the conditional MLE is post-selection consistent for $\theta^M_0$, that is,
$$
\lim_{n\rightarrow\infty} P(\|\hat\theta^M_n - \theta^M_0\|_\infty > \varepsilon \,|\,  S_n(\bar{T}_n) = M ) = 0\qquad \forall \varepsilon > 0. 
$$
\end{theorem}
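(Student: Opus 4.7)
I would prove consistency via a Wald-type argument applied to the normalized conditional log-likelihood
\[
Q_n(\theta)\;:=\;\frac{1}{n}\sum_{i=1}^{n}\ell_\theta(y_i)\;-\;\frac{1}{n}\log P_{n,\theta}(M),
\]
showing that $Q_n$ converges uniformly to its population limit $\ell(\theta)=E_{f^*}\ell_\theta(y)$ on a neighborhood of $\theta^M_0$ and that no $\theta$ outside that neighborhood can beat $Q_n(\theta^M_0)$.  The unconditional-to-conditional transfer will use the bound $P(A_n\mid S_n=M)\le P(A_n)/P_n(M)$ together with (\ref{true-o(n)}): any event whose unconditional probability is $o(1/n)$ has vanishing conditional probability, which handles all of the LLN-type deviations below because the exponential-family sufficient statistic concentrates exponentially.

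Locally, on the neighborhood $U$ from (\ref{newSup}), the penalty is negligible: writing $\sup_{\theta\in U} P_{n,\theta}(M)^{-1} \le n\alpha_n$ with $\alpha_n\to 0$, one obtains
\[
\sup_{\theta\in U}\Bigl\{-\tfrac{1}{n}\log P_{n,\theta}(M)\Bigr\}\;\le\;\frac{\log(n\alpha_n)}{n}\;\le\;\frac{\log n}{n}\;\to\;0.
\]
The standard uniform LLN for the smooth log-partition function of a regular exponential family gives $\sup_{\theta\in U}|\ell_n(\theta)-\ell(\theta)|\to 0$ almost surely, so $Q_n$ inherits this uniform convergence on $U$.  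The limit $\ell$ is uniquely maximized at $\theta^M_0$ by the definition of $\theta^M_0$ as a KL-projection together with strict concavity of $\ell$ in the regular exponential family.

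Globally, (\ref{exp-inf-prob}) provides the uniform upper bound $-\tfrac{1}{n}\log P_{n,\theta}(M)\le 1+o(1)$.  Combined with coercivity of $\ell$ and the identification gap outside any fixed neighborhood of $\theta^M_0$ (which can be made arbitrarily large by enlarging the neighborhood), this yields $\sup_{\theta\notin U} Q_n(\theta) < Q_n(\theta^M_0)$ with probability tending to one.  Together with the local uniform convergence this confines $\hat\theta^M_n$ to any preselected ball around $\theta^M_0$, and the conditional transfer from the first paragraph upgrades this to convergence conditional on $\{S_n(\bar T_n)=M\}$.  The main technical obstacle is this global separation step: because (\ref{exp-inf-prob}) bounds the penalty only by a constant rather than making it vanish, the identification gap of $\ell$---not only positive but eventually exceeding the $O(1)$ penalty---must be invoked, and justifying this relies crucially on the coercivity and strict concavity of the exponential-family log-partition function.
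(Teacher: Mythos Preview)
Your overall Wald-type strategy---control the penalty locally via (\ref{newSup}), globally via (\ref{exp-inf-prob}), and invoke an identification/consistency theorem for M-estimators---matches the paper's in spirit, but the mechanics differ in two places.

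\textbf{Conditional law of large numbers.}  The paper does \emph{not} use your transfer inequality $P(A_n\mid M)\le P(A_n)/P_n(M)$ together with exponential concentration.  Instead it applies Chebyshev directly under the conditional law: for any linear functional $t=a^{T}\bar T_n$,
\[
\Var(t\mid M)\;\le\;\frac{\Var(t)}{P_n(M)}\;=\;\frac{O(n^{-1})}{P_n(M)}\;=\;o(1)
\]
by (\ref{true-o(n)}), which gives $\bar T_n\to E(\bar T_n)$ in conditional probability using only second moments of $T(y)$ under $f^*$.  Your route needs $P(|\bar T_n-E\bar T_n|>\varepsilon)=o(1/n)$ unconditionally, which you justify by ``the exponential-family sufficient statistic concentrates exponentially.''  But $f^*$ is \emph{not} assumed to lie in the exponential family (the paper emphasizes this), so $T(y)$ need not have sub-exponential tails under $f^*$; only finite variance is implicit.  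The paper's Chebyshev argument is both more elementary and strictly more general here.  The paper then combines this with an auxiliary result (its Lemma on $E_{\theta_0^M}(\bar T_n\mid M)\to E_{\theta_0^M}(\bar T_n)$, which uses (\ref{newSup})) and a preliminary unconditional-consistency lemma, rather than proceeding directly as you do.

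\textbf{Global separation.}  You correctly identify the obstacle: (\ref{exp-inf-prob}) only yields $-\tfrac{1}{n}\log P_{n,\theta}(M)\le 1+o(1)$, not $o(1)$.  Your proposed fix---enlarge the neighborhood until the identification gap exceeds $1$---does not close the argument as stated, because the ball $U$ in (\ref{newSup}) is fixed and ``sufficiently small,'' so on the annulus between $U$ and the larger ball you still have only the constant penalty bound while the identification gap may be arbitrarily small.  The paper's proof handles this step by asserting $\bar\ell_n(\hat\theta^M_n)\ge\bar\ell_n(\theta^M_0)-o(1)$ from (\ref{exp-inf-prob}) and then citing Theorem~5.14 of van der Vaart; you are right to be more cautious, but your coercivity argument would need further work (or a slightly stronger assumption than (\ref{exp-inf-prob})) to be complete.
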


Condition \eqref{exp-inf-prob} concerns the model-based selection
probability and ensures that the conditional MLE exists with
probability $1$ as $n\rightarrow \infty$. Both the plug-in likelihood
for the selected means problem and the Lasso likelihood satisfy this
condition.  We note that this condition excludes examples such as the
singly truncated univariate normal distribution, where the probability
that an MLE does not exist is positive \citep{castillo1994singly}.
Condition \eqref{true-o(n)} concerns the true probability of selecting
the considered model $M$, which is required to not
decrease too fast. Condition \eqref{newSup} serves to ensure that the conditional score function is well behaved in the neighborhood of the estimand. 

\subsection{Theory for the Lasso}\label{sub:asymplasso}
In this section we describe how the theory from the previous
section applies to inference in linear regression after model
selection with the Lasso. Suppose that we observe an independent
sequence of observations
\begin{equation}\label{normalityAssumption}
(Y_i)_{i=1}^{\infty} \sim N(\mu_i,\sigma^{2}). 
\end{equation}
Each observation $Y_i$ is accompanied by a vector of covariates 
$X_i\in \mathbb{R}^{p}$ which we consider fixed, or equivalently, 
conditioned upon. The sufficient statistic
for the linear regression model is given by $T_n(\bX, y)  = \bX^{T}y$ 
and the model selection function $S_n(\bX,y)$ is the Lasso, which selects a model:
$$
S_n(\bX,y) = \{j: \; \hat\beta_{\text{Lasso},j} \neq 0 \}.
$$

For a selected model $M$, the conditional MLE for the regression coefficients is given by:
\begin{equation}\label{LassoMLEdef}
\hat\beta^M_n = \arg\max_{\beta} \frac{f(\bA_1 y)}{P_\beta(M)},
\end{equation}
where $
P_\beta(M) = \sum_{s} P_\beta(\bA_1(M,s)) \times P_n(\bA_0(M, s))$.
Notice that in our objective function the probabilities for not selecting the null-set are not a function of the parameters over which the likelihood is maximized. Instead, they are defined as a function of the sample size $n$ and are determined by the imputed value for $\bA_0(M)\mu$. In practice we set $\bA_0(M)\mu = 0$. This imputation method can be justified by the fact that a model is unlikely to be selected infinitely often if
$
\lim_{n\rightarrow\infty} \bA_0(M)\mu \neq 0
$.

For good behavior of the conditional MLE we made assumptions regarding
the probabilities of selecting models of interest. Many previous works
have investigated the properties that a data generating distribution
must fulfill in order for the Lasso to identify a correct model with
high probability. See for example \citet{zhao2006model}, and
\citet{meinhausen2009Lasso}. While we do not limit our attention to
the selection of the correct model, this line of study sheds light on
the conditions that any model $M\in\mathcal{M}$ must satisfy in order 
to be selected with sufficiently high probability. In the following we assume that the number of covariates $p_n = p$ is kept fixed while the sample size $n$ grows to infinity. We touch on high-dimensional settings briefly at the end of the section. 

The set of models for which we are able to guarantee convergence depends on the scaling of the $\ell_1$ penalization parameter. We consider two types of scalings:
\begin{equation}\label{slowscaling}
\lambda_n \propto \sqrt{n},
\end{equation}
\begin{equation}\label{fastscaling}
\lim_{n\rightarrow\infty}\frac{\lambda_n}{\sqrt{n}} = \infty, \qquad
\lim_{n\rightarrow\infty}\frac{\lambda_n}{n} = 0.
\end{equation}

We begin by discussing the case where the $\ell_1$ penalization parameter scales as in \eqref{slowscaling}. In this setting, the model selection probabilities can be bounded in a satisfactory manner as long as the expected projection of the model residuals on the linear subspace spanned by the inactive variables is not too large. 
\begin{lemma}\label{slowLemma}
Suppose that $\lambda_n$ scales as in \eqref{slowscaling} and that $y$ follows a normal distribution as defined in \eqref{normalityAssumption}.  Suppose further that for an arbitrary model of interest $M\in\mathcal{M}$ there is a matrix $\bSigma$ and a vector $\beta_0^M$ such that following holds:
\begin{equation}\label{slowConditions}
\frac{1}{n} \bX^T \bX \rightarrow\bSigma,
\end{equation}
\begin{equation}\label{meanConditions}
\qquad (\bX^T_M \bX_M)^{-1} \bX_M^{T}\mu\rightarrow\beta_0^M, 
\qquad \bA_0(M)\mu \rightarrow 0, \;\; a.s.
\end{equation}
Then there exists an asymptotic lower bound for the probability of selecting $M$:
$$
\lim_{n\rightarrow\infty} P_n(M) \geq \lim_{n\rightarrow\infty}\inf_{\beta^M} P_{n,\beta^M}(M) = c > 0.
$$
\end{lemma}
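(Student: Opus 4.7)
The plan is to leverage the independence of $\eta = (X_M^T X_M)^{-1} X_M^T y$ and $\xi = \lambda_n^{-1} X_{-M}^T(I - X_M(X_M^TX_M)^{-1}X_M^T) y$, which holds because they are projections of the Gaussian $y$ onto orthogonal subspaces. Combined with the disjointness (across sign vectors $s$) of the events $\bA_0(M,s) \cap \bA_1(M,s)$, which follows from uniqueness of the lasso solution's sign pattern, this yields the factorization
$$P_{n,\beta^M}(M) \;=\; \sum_{s \in \{\pm 1\}^{|M|}} P_n(\bA_0(M,s)) \cdot P_{n,\beta^M}(\bA_1(M,s)),$$
and I would bound each factor from below.

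For the $\bA_0$ factor, I would observe that under $y \sim N(X_M\beta^M, \sigma^2 I)$ we have $A_0(M) y \sim N(0, \sigma^2 A_0 A_0^T)$, because $(I - X_M(X_M^TX_M)^{-1}X_M^T) X_M = 0$ makes the mean vanish regardless of $\beta^M$. Writing $\lambda_n = c_\lambda \sqrt{n}$ and using \eqref{slowConditions},
$$A_0 A_0^T = \frac{1}{\lambda_n^{2}} X_{-M}^T (I - X_M(X_M^TX_M)^{-1}X_M^T) X_{-M} \longrightarrow \frac{1}{c_\lambda^{2}}\bigl(\Sigma_{-M,-M} - \Sigma_{-M,M}\Sigma_{MM}^{-1}\Sigma_{M,-M}\bigr),$$
while the threshold vectors converge to the finite limits $l_0(M,s), u_0(M,s) \to \mp\mathbf{1} - \Sigma_{-M,M}\Sigma_{MM}^{-1} s$. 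Since the limiting covariance is positive definite and the thresholds are finite, each $P_n(\bA_0(M,s))$ converges to a positive number, so $\min_s P_n(\bA_0(M,s)) \geq c_0 > 0$ for all $n$ large.

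For the $\bA_1$ factor, I would choose $s^* = \sign(\beta^M)$ and rewrite $\bA_1(M, s^*)$ as the event $\{\eta \in v_n(s^*) + O_{s^*}\}$, where $O_{s^*}$ denotes the $s^*$-orthant and $v_n(s^*) = \lambda_n (X_M^TX_M)^{-1} s^* = O(n^{-1/2}) \to 0$. The distribution of $\eta$ is $N(\beta^M, \sigma^{2} (X_M^TX_M)^{-1})$ with covariance of order $1/n$, so the rescaled variable $\sqrt{n}(\eta - v_n(s^*))$ is asymptotically Gaussian with bounded covariance $\sigma^{2}\Sigma_{MM}^{-1}$ and mean $\sqrt{n}\beta^M - \sqrt{n}\,v_n(s^*)$. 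Splitting $\beta^M$-space into the regime $\|\beta^M\| \geq \varepsilon$ (where the rescaled mean diverges into the correct orthant and $P_{n,\beta^M}(\bA_1(M,s^*)) \to 1$) and the critical regime $\|\beta^M\| < \varepsilon$ (where the rescaled mean remains bounded and the event reduces to a shifted orthant probability for a non-degenerate Gaussian), one extracts a uniform lower bound $\inf_{\beta^M} P_{n,\beta^M}(\bA_1(M, s^*(\beta^M))) \geq c_1 > 0$ for all large $n$.

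Multiplying the two bounds gives $\inf_{\beta^M} P_{n,\beta^M}(M) \geq c_0 c_1 > 0$, and assumption \eqref{meanConditions} ensures the true mean $\mu$ is asymptotically indistinguishable from $X_M \beta_0^M$ for the selection event, so $P_n(M)$ inherits the same positive lower bound. The main obstacle is the uniformity step for the $\bA_1$ factor: as some components of $\beta^M$ approach zero the orthant probability for $\eta$ does not converge to $1$, and one must argue that it stays bounded below by the minimal orthant probability of the limiting $N(0, \sigma^2\Sigma_{MM}^{-1})$ shifted by $c_\lambda \Sigma_{MM}^{-1} s^*$ — this relies essentially on the non-degeneracy of $\Sigma_{MM}^{-1}$ implied by \eqref{slowConditions} together with the fact that the thresholds $v_n(s^*)$ remain bounded on the $\sqrt{n}$-scale.
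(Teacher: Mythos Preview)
Your treatment of the $\bA_0$ factor matches the paper's. For the $\bA_1$ factor you take a different route: you fix $s^*=\sign(\beta^M)$ and bound the single orthant probability $P_{n,\beta^M}(\bA_1(M,s^*))$, whereas the paper instead passes from the union $\cup_s\bA_1(M,s)$ to the two-sided sub-event $\tilde{\bA}_1=\{\sqrt{n}\,|\eta_j|>u_1^*(M)\ \forall j\}$ with $u_1^*(M)=\sup_{s,j}|\lambda^*(\Sigma_{MM}^{-1}s)_j|$, and then applies the symmetric-tail inequality already established in the proof of Theorem~\ref{thm:nmeans} (the probability that a Gaussian coordinate falls outside a centered interval is minimized when the conditional mean is at the midpoint, i.e.\ zero) to obtain the $\beta^M$-free product lower bound $\prod_{j\in M}\bigl(\Phi(-u_1^*(M);0,\sigma_{j,-j}^2)+1-\Phi(u_1^*(M);0,\sigma_{j,-j}^2)\bigr)$. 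This device eliminates any case analysis on $\beta^M$ and recycles an estimate already in hand.

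Your orthant argument can be made to work, but the case split as written is not correct: $\|\beta^M\|\ge\varepsilon$ does not force the rescaled mean into the interior of $O_{s^*}$ in every coordinate (take $\beta^M=(\varepsilon,0,\dots,0)$), so $P_{n,\beta^M}(\bA_1(M,s^*))$ need not tend to $1$ on that set, and on $\|\beta^M\|<\varepsilon$ the rescaled mean $\sqrt{n}\beta^M$ is not bounded either. The clean fix, which you essentially gesture at in your final paragraph, is to bypass the split via monotonicity: for fixed $s$ the map $m\mapsto P(N(m,\sigma^2\Sigma_{MM}^{-1})\in O_s)$ is nondecreasing in $s_jm_j$ for each $j$, so the infimum over $\beta^M$ with $\sign(\beta^M)=s$ is attained at $\beta^M=0$ and equals the strictly positive constant $P(N(-c_\lambda\Sigma_{MM}^{-1}s,\sigma^2\Sigma_{MM}^{-1})\in O_s)$; minimizing over the finitely many $s$ then yields your $c_1$.
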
  

Next, we discuss the  
setting where $\lambda_n$ grows faster than $\sqrt{n}$. Here we must
impose stronger conditions on the selected model because  the
probability of selecting a model which contains covariates with zero
coefficient values may decrease to zero at an exponential rate. Furthermore, we make assumptions similar to the Irrepresentable Conditions of \citet{zhao2006model} on the selected model in order to make sure that the model selection conditions corresponding to the variables not included in the model are satisfied with high probability. We emphasize that we do not assume that the Irrepresentability Conditions hold in order to satisfy the selection of a true model, rather, we make these assumptions in order to identify models (correct or not) for which we can guarantee the consistency of our estimators. 
\begin{lemma}\label{fastLemma}
Suppose that $\lambda_n$ scales as in \eqref{fastscaling} and that conditions \eqref{normalityAssumption} and \eqref{meanConditions} hold. Furthermore, assume that:
$$
\frac{1}{n} \bX^{T}_M\bX_M\rightarrow\bSigma_M, \;\; a.s., \qquad
|\beta^M_{0j}| > 0, \;\forall j\in M, 
$$
and that
\begin{equation}\label{irrep}
\lim_{n\rightarrow\infty}\sup |\bX^{T}_{-M}\bX_M(\bX^{T}_MX_M)^{-1}s| \leq \nu < \boldsymbol{1}, \;\; \forall  s\in\{0,1\}^{|M|},
\end{equation}
for some constant $\nu$, where $\boldsymbol{1}$ is a vector of ones and the inequality holds element wise. Under these conditions the following limits hold:
$$
\lim_{n\rightarrow\infty}\inf_{\beta^M} P_{n,\beta^M}(M)e^n = \infty,
\qquad \lim_{n\rightarrow\infty} \inf_{\beta^M \in U(\beta^M_0)} P_{n,\beta^M}(M) = 1.
$$
\end{lemma}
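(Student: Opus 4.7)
The plan is to exploit the disjoint decomposition $\{S_n(X,y)=M\}=\bigcup_{s\in\{\pm1\}^{|M|}}\bA_0(M,s)\cap\bA_1(M,s)$ from~\eqref{MlassoDefinition} together with the fact that, under~\eqref{normalityAssumption}, the refitted least squares vector $\eta=(X^T_MX_M)^{-1}X_M^Ty$ and the inactive subgradient vector $\xi=A_0(M)y$ depend on $y$ only through the orthogonal projections onto the column space of $X_M$ and its complement, and are therefore independent Gaussians. This yields the factorization
$$P_{n,\beta^M}(M)=\sum_{s} P\bigl(\bA_0(M,s)\bigr)\cdot P_{n,\beta^M}\bigl(\bA_1(M,s)\bigr),$$
in which $P(\bA_0(M,s))$ does not depend on $\beta^M$ because $A_0(M)X_M=0$. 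The null-set factor is handled quickly: $\xi$ is mean zero under any $\beta^M$ with covariance $\sigma^2 X^T_{-M}(I-X_M(X_M^TX_M)^{-1}X_M^T)X_{-M}/\lambda_n^2$ of entrywise order $n/\lambda_n^2=o(1)$ by~\eqref{fastscaling}, while \eqref{irrep} forces $(l_0(M,s),u_0(M,s))$ to contain a fixed-width box around the origin uniformly in $s$. Chebyshev's inequality then gives $P(\bA_0(M,s))\to 1$ uniformly in $s$ and $\beta^M$.

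All remaining work is in the active-set event $\bA_1(M,s)$. Because $(X^T_MX_M)^{-1}=O(1/n)$, the thresholds $l_j(s),u_j(s)$ in~\eqref{upperlowerdef} are of order $\lambda_n/n\to 0$. For the \emph{second} conclusion I would fix $\beta^M$ in a ball $U(\beta^M_0)$ small enough that $\sign(\beta^M)=\sign(\beta^M_0)=:s_0$ and $|\beta^M_j|$ is bounded away from zero; keeping only the single term $s=s_0$, for each $j\in M$ the quantity $s_{0,j}\eta_j=|\beta^M_j|+O_p(n^{-1/2})$ swamps the vanishing threshold, so $P_{n,\beta^M}(\bA_1(M,s_0))\to 1$ uniformly on $U(\beta^M_0)$, which combined with the null bound yields $\inf_{U(\beta^M_0)}P_{n,\beta^M}(M)\to 1$. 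For the \emph{first} conclusion I would reduce to the case $\beta^M=0$ by a translation: for each $\beta^M$ set $s^\star=\sign(\beta^M)$, write $\eta=\beta^M+\tilde\eta$ with $\tilde\eta\sim N(0,\sigma^2(X^T_MX_M)^{-1})$, and observe that $\{s^\star_j\eta_j>u_j(s^\star)\}=\{s^\star_j\tilde\eta_j>u_j(s^\star)-|\beta^M_j|\}$ contains $\{s^\star_j\tilde\eta_j>u_j(s^\star)\}$ coordinatewise, hence $P_{n,\beta^M}(\bA_1(M,s^\star))\ge P_{n,0}(\bA_1(M,s^\star))$.

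It remains to lower-bound $P_{n,0}(\bA_1(M,s))$ uniformly in $s$. After rescaling by $\sqrt{n}$, the vector $\sqrt{n}\tilde\eta$ has covariance converging to $\sigma^2\Sigma_M^{-1}$ with bounded eigenvalues, and the rescaled thresholds are $O(\lambda_n/\sqrt{n})$. A standard Gaussian anti-concentration bound---the $N(0,V)$-mass of a fixed-radius Euclidean ball centered at a point $v$ is at least a constant multiple of $e^{-\|V^{-1/2}v\|^{2}/2}$---yields $P_{n,0}(\bA_1(M,s))\ge C_1 e^{-C_2\lambda_n^2/n}$ for constants $C_1,C_2>0$ depending only on $\Sigma_M$ and uniform in $s$. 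Minimizing over the $2^{|M|}$ sign vectors and combining with the null bound gives $\inf_{\beta^M}P_{n,\beta^M}(M)\ge\tfrac12 C_1 e^{-C_2\lambda_n^2/n}$ for $n$ large, and since $\lambda_n=o(n)$ forces $\lambda_n^2/n=o(n)$ we conclude $\inf_{\beta^M}P_{n,\beta^M}(M)\cdot e^n\ge \tfrac12 C_1 e^{n-C_2\lambda_n^2/n}\to\infty$. The main obstacle will be keeping this uniform Gaussian lower bound clean in the presence of dependence between the coordinates of $\tilde\eta$; the argument must simultaneously use both halves of~\eqref{fastscaling}, namely $\lambda_n\gg\sqrt{n}$ (to make $P(\bA_0)\to 1$ and to make the active event near $\beta^M_0$ typical) and $\lambda_n\ll n$ (so that $n-C_2\lambda_n^2/n$ genuinely diverges).
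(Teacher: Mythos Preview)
Your proposal is correct and follows the same high-level plan as the paper: factor $P_{n,\beta^M}(M)$ via the sign decomposition~\eqref{MlassoDefinition}, dispatch $P(\bA_0(M,s))\to 1$ using that the covariance of $\xi$ is $O(n/\lambda_n^2)=o(1)$ together with assumption~\eqref{irrep}, obtain the second conclusion by keeping the single sign $s_0=\sign(\beta^M_0)$ and letting the $O(\lambda_n/n)$ thresholds be dominated by the bounded-away-from-zero $|\beta^M_j|$, and obtain the first conclusion from a uniform-in-$\beta^M$ lower bound on the active-set probability.

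The only substantive difference is in how that last lower bound is produced. The paper reuses the coordinate-wise conditional factorization already established in the proof of Lemma~\ref{slowLemma} (the bound~\eqref{AtildeBound}) and then applies the Mill's-ratio inequality $\Phi(-t)\ge C\,t(1+t^2)^{-1}e^{-t^2/2}$ to each of the $|M|$ factors; to keep this transparent it works under the simplifying parametrization $\lambda_n/n^{1/2+\delta}=\lambda^*$ with $0<\delta<1/2$, arriving at $\inf_{\beta^M}P_{n,\beta^M}(\tilde{\bA}_1)\gtrsim n^{-\delta|M|/2}e^{-cn^{2\delta}}$. Your route instead first reduces to $\beta^M=0$ by the monotonicity observation $P_{n,\beta^M}(\bA_1(M,s^\star))\ge P_{n,0}(\bA_1(M,s^\star))$ with $s^\star=\sign(\beta^M)$, and then inscribes a fixed-radius ball in the shifted orthant to get the Gaussian-density lower bound $C_1e^{-C_2\lambda_n^2/n}$. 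Your argument has the minor advantage of not requiring the polynomial scaling simplification, since $\lambda_n=o(n)$ alone yields $\lambda_n^2/n=o(n)$ and hence $e^{n-C_2\lambda_n^2/n}\to\infty$; the paper's bound, on the other hand, recycles machinery already in place from Lemma~\ref{slowLemma} and the proof of Theorem~\ref{thm:nmeans}. Both deliver the same $e^n\cdot e^{-o(n)}\to\infty$ conclusion.
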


The linear regression model trivially satisfies the modeling assumptions we made in the previous section. Thus, under the conditions given in the lemmas stated in this section, the conditional MLE for a model selected by the Lasso can be guaranteed to be well behaved.
\begin{corollary}\label{Lassothm}
Fix a model $M\in \mathcal{M}$ and suppose that the conditions of either Lemma \ref{slowLemma} or Lemma \ref{fastLemma} are satisfied. Then the conditional MLE \eqref{LassoMLEdef} is consistent for $\beta^M_0$.
\end{corollary}

\begin{remark}[High-Dimensional Problems]
The Lasso is often used in cases where the number of covariates $p$ is
much larger than $n$. In order to make asymptotic analysis relevant to
such cases it is common to assume that $p$ grows with the sample
size. While the theory developed here does not explicitly treat such a high-dimensional setting, none of our assumptions prevent us from allowing the model selection function $S_n$ to consider a growing number of covariates as $n$ grows. Specifically, if we assume that the $\ell_1$ penalty scales at the rate of 
$
\lambda_n = O\left( \sqrt{n\log p_n} \right)
$
as prescribed e.g.~by \citet{hastie2015statistical}, then our theory
applies as long as the assumptions of Lemma \ref{fastLemma} are satisfied and $\log p _n = o(n)$. 
\end{remark}

\begin{remark}[Normality]
While we made a simplifying normality assumption, we expect that for fixed dimension $p$, non-normal errors can be addressed using conditions similar to those outlined by
\citet{tibshirani2015uniform}. For theory for selective inference with
non-normal errors in the high-dimensional case, see the work of
\citet{Tian16}.
\end{remark}

\section{Simulation Study} \label{simStudy}

In order to more thoroughly assess the performance of the proposed
post-selection estimator for the Lasso, we perform a simulation study,
which we pattern after that in \citet{meinhausen2007relaxed}. 
We consider prediction and coefficient estimation using Lasso, our
conditional estimator and refitted Lasso.  We note already
that 
while some existing theoretical works outline conditions under which the refitted Lasso should outperform the Lasso in prediction and estimation \citep{lederer2013trust}, this does not occur in any of our simulation settings. For confidence intervals we compare our Wald confidence intervals to the confidence intervals of \citet{Lee16} which we term \emph{Polyhedral}. We find that both selection adjusted methods achieve close to nominal coverage rates. 

We generate artificial data for our simulations in a similar manner as we have done for Example \ref{LassoExample} in Section \ref{sec:LassoSGD}. We vary the sample size $n= 100, 200, 400, 800$, signal-to-noise ratio $\snr = 0.2, 0.8$, and the sparsity level $k = 2, 5, 10$. For each combination of parameter values we generate data and fit models $400$ times.  We keep the amount of dependence fixed at $\rho = 0.5$ and the number of candidate covariates fixed at $p = 400$. 

\begin{figure}[t]
\begin{center}
\includegraphics[width=5 in]{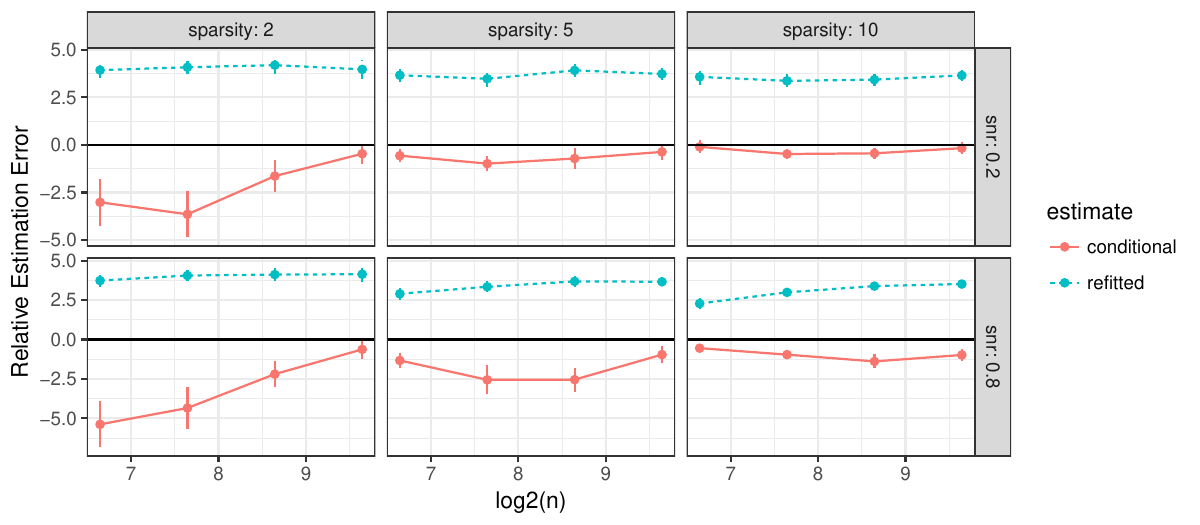} 
\caption[Estimation error of the conditional MLE for Lasso regression coefficients]{The relative estimation error of the regression coefficients compared to the Lasso as defined in \eqref{relativeMSEdef}. The error of the conditional estimates  (solid red line) is lower than that of the Lasso in all simulation settings and the error of the refitted least-squares estimates (dashed blue line) was worse than that of the Lasso in all simulations.}
\label{LassoMSEfig}
\end{center}
\end{figure}

In Figure \ref{LassoMSEfig} we plot the log relative estimation error of the refitted-Lasso estimates and the conditional estimates compared to the Lasso as defined by:
\begin{equation}\label{relativeMSEdef}
\frac{1}{|M|}\left(\sum_{j\in M}\log_2(\hat\beta_j - \beta_j) - \log_2(\hat\beta_{Lasso_j} - \beta_j)\right).
\end{equation}
This measure of error gives equal weights to all regression coefficients regardless of their absolute magnitude. In all simulation settings the refitted least-squares estimates are significantly less accurate than the Lasso or the conditional estimates. The conditional estimates tend to be more accurate than the Lasso estimates in all simulation settings. The conditional estimate tends to do better when there are at least some large regression coefficients in the true model. 

\begin{figure}[t]
\begin{center}
\includegraphics[width=5 in]{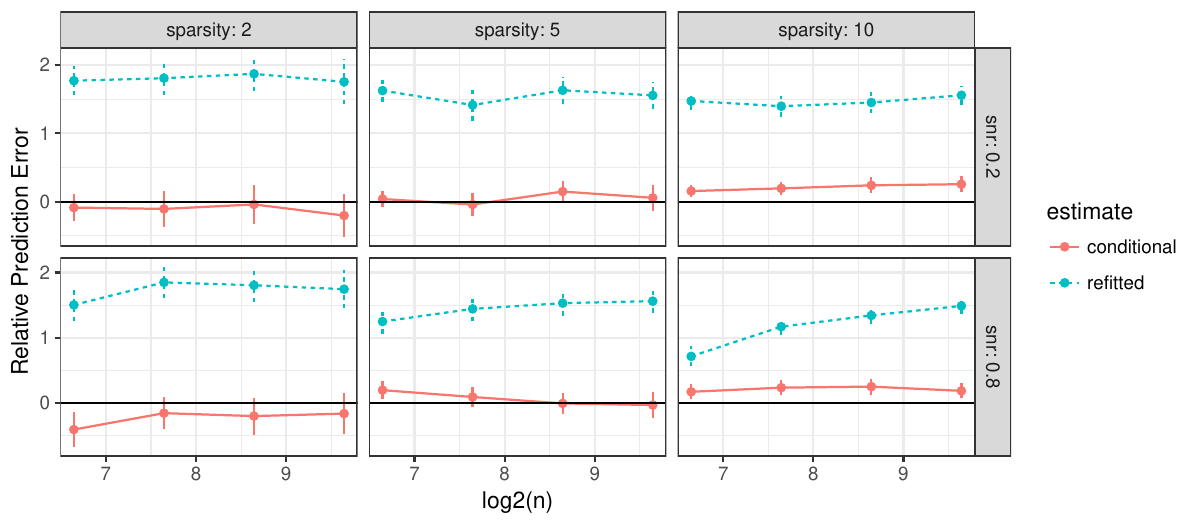} 
\caption[Prediction error of the conditional MLE compared to the Lasso]{The log of the ratio between the prediction errors for the conditional (solid red line) and refitted least-squares regression estimates (dashed blue line) relative to the prediction error of the Lasso as defined in \eqref{relativePredDef}. The conditional MLE produces better prediction than the Lasso when the signal is spread over a smaller number of variables.}
\label{LassoPrediction}
\end{center}
\end{figure}

In Figure \ref{LassoPrediction} we present the relative prediction error of the refitted least-squares Lasso estimates and the conditional estimates, as defined by:
\begin{equation}\label{relativePredDef}
\log_2\|\bX\hat\beta - \mu\|^{2}_2 - \log_2\|\bX\hat\beta_{\text{Lasso}} - \mu\|^{2}_2.
\end{equation}
Here, the Lasso provides more accurate predictions when the true model has more non-zero coefficients and the conditional estimator tends to be more accurate when the true model is sparse. 

\begin{figure}[t]
\begin{center}
\includegraphics[width= 5 in]{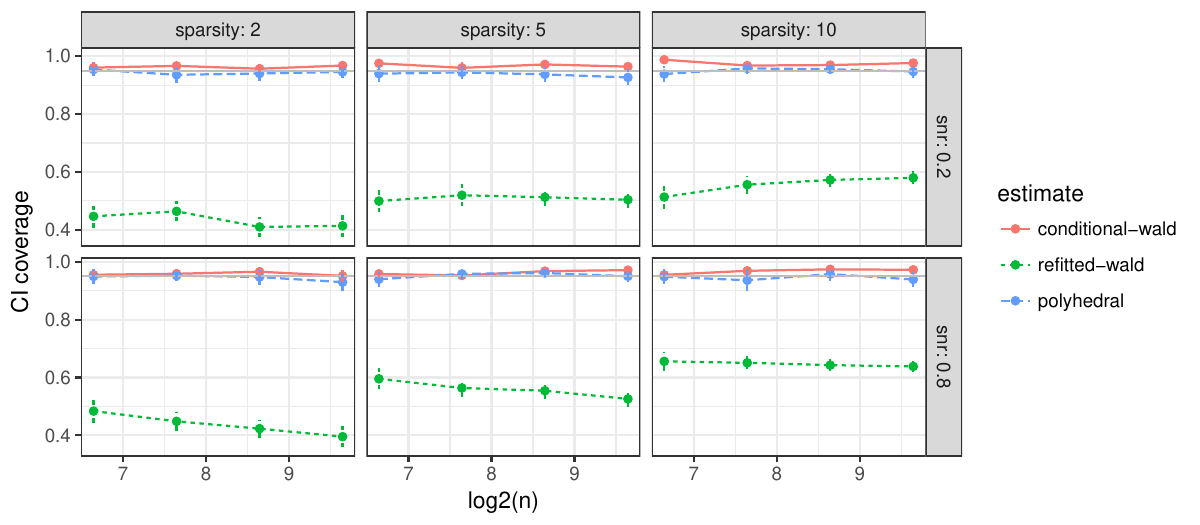} 
\caption[Coverage rate of post-selection confidence intervals for the Lasso]{Confidence interval coverage rate after model selection. Both the Conditional Wald CIs (solid red line) and the Polyhedral CIs (dashed blue line) achieve the target coverage rate of $95\%$ (horizontal grey line). The coverage rate of the unadjusted Wald confidence intervals (dotted green line) is far below nominal.}
\label{LassoCover}
\end{center}
\end{figure}

In Figure \ref{LassoCover} we plot the coverage rates obtained by the Conditional-Wald confidence intervals proposed here, the Polyhedral confidence intervals and the refitted `naive' confidence intervals. Both  of the selective methods obtain close to nominal coverage rates. 
The coverage rates of the refitted confidence intervals which were not adjusted for selection were far below the nominal levels in all simulation settings. 

While the two types of selection adjusted confidence intervals seem to
be roughly on par with respect to their coverage rate, they tend to
differ in their size. For Figure \ref{LassoCIsize} we 
generate the additional datasets with a smaller number of candidate covariates $p = 200$, a larger range of sample sizes- $n = 40, 75, 150, 300, 600, 1250, 2500, 5000, 10000$, a signal-to-noise ratio of $\snr = 0.2$ and $k = 10$ non-zero regression coefficients. 

We face some difficulty in assessing the average size of the Polyhedral confidence intervals, as these sometimes have an infinite length. a measure for the length of a typical confidence interval, we take the median confidence interval length in each simulation instance. In Figure \ref{LassoCIsize} we plot boxplots describing the distribution of the log relative size of the selection adjusted confidence intervals to that of the unadjusted refitted confidence intervals which tend to be the shortest. We find that as the sample size increases, the sizes of the Conditional-Wald confidence intervals are roughly twice the size the unadjusted confidence intervals, while the typical size of a Polyhedral interval is about twice the size of the Conditional-Wald confidence interval.

\begin{figure}[t]
\begin{center}
\includegraphics[width= 4.5 in]{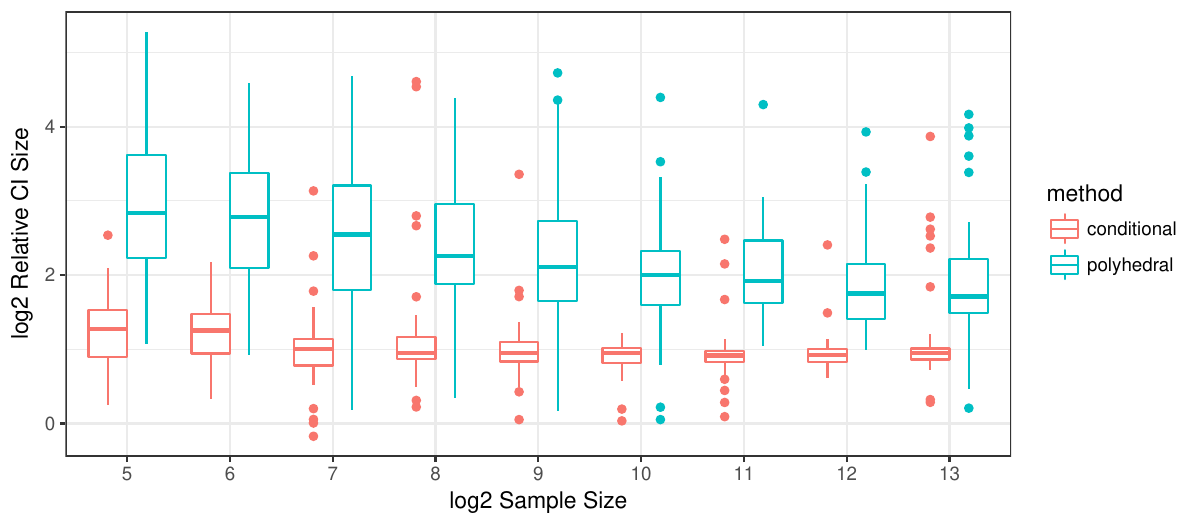} 
\caption[Comparison of confidence interval sizes after model selection with the Lasso]{Boxplots of the relative median sizes of the selection adjusted confidence intervals to the in relations to the unadjusted ones. The Conditional-Wald confidence are much shorter than the Polyhedral ones under all simulation settings and their size are far less variable.}
\label{LassoCIsize}
\end{center}
\end{figure}

\section{Conclusion}
In this work we presented a computational framework which enables, for
the first time, the computation of correct maximum likelihood
estimates after model selection with a possibly large number of
covariates. We applied the proposed framework to the computation of
maximum likelihood estimates of selected multivariate normal means and
regression models selected via the lasso. 

Our methods take the arguably most ubiquitous approach to data analysis, that of computing maximum likelihood estimates and constructing Wald-like confidence intervals. Furthermore, we do not involve conditioning on information additional to the identity of the selected model. A practice which, as shown by \citet{Fithian15}, may lead to a loss in efficiency.

We experimented with the proposed estimators and confidence intervals
in a comprehensive simulation study. The proposed conditional
confidence intervals were shown to achieve conservative coverage rates
and the point estimates were shown to be preferable to the
refitted-least squares coefficients estimates in all simulation
settings, and preferable to the Lasso coefficient estimates when there are large signals in the data. 

While in this work we focused on inference in the linear regression method, our framework and theory are directly applicable to any exponential family distribution. Specifically, it is immediately applicable to estimation of parameters of selected generalized linear models using the normal approximations proposed by \citet{taylor2016generalized}.

\subsection*{Supplementary Material} 
Proofs of theorems can be found in Appendix A. Some numerical examples for different plug-in methods for the Lasso MLE are in Appendix B. Analysis for maximum likelihood inference after one-sided testing is in Appendix C. Pseudo-code  for the algorithms used in the paper is in Appendix D. A software package and example scripts can be found at: \emph{https://github.com/ammeir2/selectiveMLE}.

%\bibliography{mselectionBIB}
\bibliography{aggregateBib}
\bibliographystyle{apalike}

\appendix 
\section{Proof of theorems}
\subsection{Proof of Theorem \ref{thm:nmeans}}
In their work on the convergence of stochastic gradient methods, \citet{Bertsekas00} formulate a general stochastic gradient method as an iterative optimization method consisting of steps of the form:
$$
x_{t+1} = x_{t} + \gamma_t(s_t + w_t),
$$
where $\gamma_t$ satisfies the condition from \eqref{gammaconditions},
$s_t$ is a deterministic quantity related to the true gradient and
$w_t$ is a noise component. They outline conditions regarding $s_t$
and $w_t$ that ensure the convergence of the ascent algorithm to an
optimum of a function $f(x)$ which possesses a gradient $\nabla
f(x)$. The conditions require that there exist positive scalars $c_1$
and $c_2$ such that for all $t$:
\begin{equation}\label{stConditions}
c_1\|\nabla f(x_t)\|^{2} \leq \nabla f(x_t)^{T} s_t, \;\;\;\;\; \|s_t\| \leq c_2(1 + \|\nabla f(x_t)\|),
\end{equation}
and that
\begin{align}\label{wtExp}
\E\left[ w_t  \,\middle|\, \mathcal{F}_t \right] &= 0,\\
\label{wtVar}
\E\left[\| w_t \|^{2} \,\middle|\, \mathcal{F}_t \right] &\leq A\left(1 + \|\nabla f(x_t)\|\right),
\end{align}
where $\mathcal{F}_t$ is the filtration at time $t$, representing all historical information available at time $t$ regarding the sequence $(w_{t},s_{t})_{i=1}^{\infty}$.

In our case, the function of interest is the conditional
log-likelihood $f(x) = l(\mu) := \log \mathcal{L}(\mu)$, where the
coordinates of $\mu$ which were not selected are imputed with the
corresponding observed coordinates of $y$. The conditions regarding
the deterministic component in \eqref{stConditions} hold as
$s_t = \nabla l(\mu | M)$, is the gradient itself.
 In Theorem \ref{thm:nmeans} we assumed that we are able to take independent draws from the truncated multivariate normal distribution, meaning that
$$
\E\left[ w_t \,\middle|\, \mathcal{F}_t \right] = \E\left[y^t - \nabla l(\mu | M)\right] = 0.
$$
In practice, we should make sure that we run the Markov chain for a sufficiently large number of iterations between gradient updates in order for \eqref{wtExp} to hold in good approximation.

The remaining issue is to bound the variance of $w_t$.  The first step
is finding an upper bound for the variance of $w_t$ as a function of
$\mu$. In the following, we denote by $f(y)$ the unconditional density
of $y$, by $f(y_j)$ the marginal (unconditional) density of $y_j$ and
by $f(y_{-j} |y_j)$ the conditional distribution of $y_{-j}$ given
$y_j$.   Since the mean minimizes an expected squared deviation we
have
\begin{align}
\E\left[\left(y_j - \E(y_j | M)\right)^2\,\middle|\, M \right]
  \nonumber
&\le \E\left[\left(y_j - \mu_j\right)^2\,\middle|\, M \right]
  \nonumber
\\ \nonumber
&=\int \left(y_j - \mu_j\right)^{2} f(y | M)\,dy 
\\ \nonumber
&=\int_{M} \left(y_j - \mu_j\right)^{2} \frac{f(y)}{P(M)}\, dy .
\end{align}
Let $C(y_j) = \int_{M} f(y_{-j}|y_j)\, dy_{-j}$, which satisfies
$0\leq C(y_j) \leq 1$.  Then
\begin{align}
\int_{M} \left(y_j - \mu_j\right)^{2} \frac{f(y)}{P(M)}\, dy
&=\int_{M} \left(y_j - \mu_j\right)^{2} \frac{C(y_j)}{P(M)}
  f(y_j)\, dy_j 
 \nonumber\\
\nonumber
&\leq \int_{M} \left(y_j - \mu_j\right)^{2} \frac{1}{P(M)} f(y_j) \,dy_j \\
&\leq \int_{\mathbb{R}} \left(y_j - \mu_j\right)^{2} \frac{1}{P(M)}
f(y_j) \,dy_j \label{sigBound} = \frac{\sigma^{2}_j}{P(M)}.
\end{align}

The next step in bounding the variance of $w_t$ is bounding $P(M)$
from below. The difficulty with finding a lower bound $P(M)$ is that
one may make it arbitrarily small by varying the coordinates of $\mu$
for the non-selected coordinates. This is the motivation behind
setting them to the observed values and only estimating the selected
coordinates, resulting in the Z-estimator described in
\eqref{mvtzest}.

Assume without loss of generality that the first $k$ coordinates of
$\mu$ were not selected and that the last $p - k + 1$ were selected.
We write
\begin{align*}
P(M) = \int_M f(y) dy = \int _{M} f(y_1 | y_2,\dots,y_p)\times
  \dots\times f(y_p) \, dy.
\end{align*}
We begin with the integration with respect to $y_1$:
$$
\int_{M} f(y_1 |y_2,\dots,y_p) \, dy_1= 1 - \Phi(u_1; \mu_{1,-1}, \sigma^{2}_{1, -1}) + \Phi(l_1 ; \mu_{1,-1}, \sigma^{2}_{1, -1}).
$$
Now, denote by $m_j = (l_j + u_j) / 2$ the mid-point between $l_j$ and
$u_j$. We have 
\begin{multline*}
1 - \Phi(u_1; \mu_{1,-1}, \sigma^{2}_{1, -1}) + \Phi(l_1 ; \mu_{1,-1}, \sigma^{2}_{1, -1}) 
\;\geq\; 1 - \Phi(u_1; m_1, \sigma^{2}_{1, -1}) + \Phi(l_1 ; m_1,
\sigma^{2}_{1, -1}) 
\\
\;\geq\;  \Phi(l_1 ; m_1, \sigma^{2}_{1, -1}) 
\;\geq\;  \Phi(l_1 ; u_1, \sigma^{2}_{1, -1}).
\end{multline*}
We can apply a similar lower bound to all selected coordinates to obtain:
\begin{align}
  \nonumber
P(M) &\geq \prod_{j\in M}\Phi(l_j; u_j, \sigma^{2}_{j, -j}) 
\int_{M} f(y_{p-k+1}|y_{p-k+2},\dots,y_p)\times\dots\times
f(y_p)\,dy_{p-k+1}\dots dy_p\\
\label{probBound}
&=P\left(j\notin S(y) \,\forall j\notin M\right) \prod_{j\in M}\Phi(l_j; u_j, \sigma^{2}_{j, -j}).
\end{align}
Taking \eqref{sigBound} and \eqref{probBound} together, we obtain the desired bound:
$$
\Var(y_j) \leq \frac{\tr(\bSigma)}{P(\bigcap_{j\notin M} \{j\notin M\}) \prod_{j\in M}\Phi(l_j; u_j, \sigma^{2}_{j, -j})}.
$$
\qed

The proof of Theorem \ref{LassoSGDconvergence} follows in a similar fashion.

\subsection{Proof of Lemma \ref{delayedRejection}}
The proposal vectors defined in the lemma are given by:
\begin{flalign*}
&r_{1}^{\rightarrow} =\left(\eta_{1}^{t},\dots,\eta^{t}_{j-1},r_j,\eta^{t-1}_{j+1},\dots,\eta^{t-1}_{p}, \xi^t \right), \quad
r_{2}^{\rightarrow} = \left(r_1,\dots,r_{j-1},r_j,r_{j+1},\dots,r_p, \xi^t \right), 
\\
&r_1^{\leftarrow} =  \left(r_1,\dots,r_{j-1},\eta^{t-1}_j,r_{j+1},\dots,r_p, \xi^t \right), \quad
r_2^{\leftarrow} = \left(\eta_{1}^{t},\dots,\eta^{t}_{j-1},\eta_{j}^{t-1},\eta_{j+1}^{t-1},\dots,\eta^{t-1}_{p}, \xi^t \right).
\end{flalign*}
The proposed algorithm for sampling $\eta | M, \xi$ is a two-step Delayed Rejection
Metropolis-Hastings sampler.  In our case the first step is to propose
a sample from the full conditional distribution of $\eta_j$ given
$\eta_{-j}$.  We denote the first proposal by $r_{1}^{\rightarrow}$.  
Note that at this stage only the $j$th coordinate has been
changed. The acceptance probability for this step is given by:
$$
\alpha(r_{2}^{\leftarrow}, r_{1}^{\rightarrow}) = 
\frac{f(r_{1,j}^{\rightarrow}|r_{1,-j}^{\rightarrow})}
{f(r_{2,j}^{\leftarrow}|r_{2,-j}^{\leftarrow})}
\frac{f(r_{2,j}^{\leftarrow}|r_{2,-j}^{\leftarrow})}
{f(r_{1,j}^{\rightarrow}|r_{1,-j}^{\rightarrow})} I\{S_n(X, r^{\rightarrow}_{1}) = M\} 
= I\{S_n(\bX, r^{\rightarrow}_{1}) = M\}.
$$
That is, the acceptance probability of the first proposal is either $1$ or $0$ depending on whether the proposal satisfies conditions \eqref{oneCondition} and \eqref{zeroCondition}. 

If the first proposal is not accepted and \eqref{zeroCondition} is satsifeid, then we make a second proposal $r^{\rightarrow}_2$. The acceptance probability for the second proposal as defined by \citet{mira2001metropolis} is given by:
$$
\alpha(r_2^{\leftarrow},r_{1}^{\leftarrow},r_2^{\leftarrow}) = 
\frac{f(r^{\rightarrow}_2) q_1(r^{\rightarrow}_2, r^{\leftarrow}_1) q_2(r^{\rightarrow}_2, r^{\leftarrow}_1, r^{\leftarrow}_2)\left(1 - \alpha(r_{2}^{\rightarrow}, r_{1}^{\leftarrow})\right)}
{f(r^{\leftarrow}_2) q_1(r^{\leftarrow}_2, r^{\rightarrow}_1) q_2(r^{\leftarrow}_2, r^{\rightarrow}_1, r^{\rightarrow}_2)\left(1 - \alpha(r_{2}^{\leftarrow}, r_{1}^{\rightarrow})\right)},
$$
where $q_1(x,y)$ is the density of the first proposal and $q_2(x,z,y)$ is the density of the second proposal.  We only make a second proposal if $\alpha(r_{2}^{\leftarrow}, r_{1}^{\rightarrow}) =0$ and therefore the ratio is always zero if $r_1^{\leftarrow}$ is a legal value. If both $r^{\leftarrow}_1$ and $r^{\rightarrow}_1$ are illegal then $\alpha(r_2^{\leftarrow},r_{1}^{\leftarrow},r_2^{\leftarrow}) $ is non-zero and the proposal densities are given by:
\begin{align*}
&q_1(x,y) = f\left(y_j \middle| \{y_j < l_j\} \cup \{u_j < y_j\},x_{-j}\right), \\
&q_2(x,z,y) = \prod_{k\neq j} \frac{\varphi(y_k ; x_k, \sigma^{2}_{k,-k})}{P(y_k \in (a_k, b_k) ; x_k, \sigma^{2}_{k,-k})}.
\end{align*}
Put together, we get:
$$
q(x, y) := q_1(x,y)q_2(x,z,y),
$$  
which yields the desired result.
\qed

\subsection{Proof of Theorem \ref{thm:naiveconsistency}}\label{naiveConsistency}
Under the assumptions of Theorem \ref{thm:naiveconsistency} we show
that the unadjusted MLE is consistent even in the presence of model selection, in the sense that:
$$
\lim_{n\rightarrow\infty} P(\| \hat\theta^M_n - \theta^M_0\|_\infty \geq \varepsilon | M)= 0.
$$
We prove this result by showing that it holds for a model $M\in \mathcal{M}$ that satisfies the conditions of the theorem. Assume without loss of generality that $\theta^M \in \Theta^M \subseteq \mathbb{R}^{p}$. In the following we will use the shorthand $I_n(M) = I_{\{ S_n(y) = M\}}$. The results follows from the fact that as long as the probability of model selection can be bounded from below, then the selection thresholds cannot be too far a way from the true parameters.

\begin{align*}
&\lim_{n\rightarrow\infty} P(\| \hat\theta^M_n - \theta^M_0\|_\infty \geq \varepsilon | M)   \\
&= \lim_{n\rightarrow\infty} \frac{P_n(M | \{\| \hat\theta^M_n - \theta^M_0\|_\infty \geq \varepsilon\}) P(\| \hat\theta^M_n - \theta^M_0\|_1 \geq \varepsilon)}{P_n(M)}  \\
&\leq \lim_{n\rightarrow\infty} \frac{P(\| \hat\theta^M_n - \theta^M_0\|_\infty \geq\varepsilon)}{P_n(M)} \\
&=  \lim_{n\rightarrow\infty} \frac{P\left(\bigcup_{j=1}^{p} \{|\hat\theta^M_{nj} - \theta^M_{0j}| \geq \varepsilon \} \right)}{P_n(M)}  \\ 
&\leq   \lim_{n\rightarrow\infty}  \sum_{j=1}^{p} \frac{P(|\hat\theta^M_{nj} - \theta^M_{0j}| \geq\varepsilon)}{P_n(M)}  \\
&=   \lim_{n\rightarrow\infty}  \sum_{j=1}^{p} \frac{P( |\sqrt{n}(\hat\theta^M_{nj} - \theta^M_{0j})| \geq \sqrt{n}\varepsilon)}{P_n(M)} \\
&\leq^{(*)} \lim_{n\rightarrow\infty}  \sum_{j=1}^{p} \frac{2e^{\frac{-n\varepsilon^{2}}{2\sigma_{Mj} }}}{P_n(M)}=^{(**)} 0 ,
\end{align*}
where $\sigma^{2}_{Mj}$ is the $j$th diagonal element of $\bSigma^M$
and $(*)$ holds by subgaussian concentration. The equality $(**)$ holds by our assumption regarding the rate at which $P_n(M)$ is allowed to tend to zero. \qed

\subsection{Proof of Theorem \ref{thm:consistency}} 

Before we prove the theorem, we first state and and prove a couple of Lemmas that will come in handy in the proof of Theorem \ref{thm:consistency}. Lemma \ref{conditionalinnaive} to follow states that the conditional MLE is consistent for $\theta^M_0$ even when used in the non-conditional setting (when the model to be estimated is pre-determined).

\begin{lemma}\label{conditionalinnaive}
Set a family of distributions $M$ and assume that no data-driven model selection has been performed.  Then under the conditions of Theorem \ref{thm:consistency} the conditional MLE is consistent for $\theta^M_0$, that is,
$$
P(\|\hat\theta^M_n - \theta^M_0\|_\infty > \varepsilon) \rightarrow 0.
$$
\end{lemma}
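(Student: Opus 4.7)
The strategy is to exploit the score-equation characterization of the conditional MLE in a regular exponential family and reduce consistency to inverting the unconditional mean mapping, with the three decay conditions used to control the penalty term.

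First, I would use Lemma~\ref{scoreLemma}: whenever the conditional MLE exists, it satisfies the score equation
\[
\bar T_n \;=\; E_{\hat\theta_n^M}\!\bigl(\bar T_n \,\big|\, M\bigr),
\]
where the selection is based on $\bar T_n$. Because the lemma is stated in the non-conditional setting, the SLLN applies and gives $\bar T_n\to E_{f^{*}}(T)=E_{\theta_0^M}(T)$ almost surely. So the job reduces to inverting the right-hand side uniformly on a neighborhood of $\theta_0^M$.

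Next, I would quantify how close the conditional mean is to the unconditional one via Cauchy--Schwarz:
\[
\bigl\|E_{\theta}(\bar T_n|M)-E_\theta(\bar T_n)\bigr\|
\;=\;\frac{\|\mathrm{Cov}_\theta(\bar T_n, I_M)\|}{P_{n,\theta}(M)}
\;\le\;\sqrt{\frac{\operatorname{tr}\Sigma_\theta}{n\,P_{n,\theta}(M)}}.
\]
Condition~\eqref{newSup} supplies $P_{n,\theta}(M)^{-1}=o(n)$ uniformly on the ball $U$ around $\theta_0^M$, so the right-hand side is $o(1)$ uniformly for $\theta\in U$. Consequently, on the event $\{\hat\theta_n^M\in U\}$, the score equation rearranges to $E_{\hat\theta_n^M}(T)=\bar T_n+o(1)\to E_{\theta_0^M}(T)$. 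Since $\theta\mapsto \nabla A(\theta)=E_\theta(T)$ is a homeomorphism from the natural parameter space onto the interior of the convex hull of the support of $T$ for a regular exponential family, continuity of the inverse yields $\hat\theta_n^M\to\theta_0^M$ in probability (on this event).

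The main obstacle is the localization step: proving that $\hat\theta_n^M\in U$ with probability tending to one. For this I would view the conditional MLE as the maximizer of the penalized objective
\[
M_n(\theta)\;=\;\tfrac{1}{n}\sum_{i=1}^{n}\ell_\theta(y_i)\;-\;\tfrac{1}{n}\log P_{n,\theta}(M),
\]
and combine three ingredients. A uniform LLN on compact subsets of $\Theta^M$ gives $L_n(\theta):=\tfrac{1}{n}\sum_i\ell_\theta(y_i)\to Q(\theta):=E_{f^{*}}[\ell_\theta(y)]$ uniformly, with $Q$ strictly concave and uniquely maximized at $\theta_0^M$. Condition~\eqref{newSup} forces the penalty $-\tfrac{1}{n}\log P_{n,\theta}(M)$ to vanish uniformly on $U$, so $M_n\to Q$ uniformly on $U$. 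Condition~\eqref{exp-inf-prob} bounds the penalty by $1+o(1)$ globally, so the conditional and unconditional objectives differ by a uniformly bounded amount. The delicate point is that this $O(1)$ perturbation could in principle shift the argmax, so one must choose $U$ large enough (or argue in two stages) that on a compact set $K\ni\theta_0^M$ the identifiability gap $Q(\theta_0^M)-\sup_{K\setminus U}Q$ exceeds the penalty bound; properness of $-Q$ in a regular exponential family confines the maximizer to such a $K$. Condition~\eqref{true-o(n)} enters only to translate the resulting a.s.\ statements into conditional convergence at the required rate. Combining the localization with the score-equation argument above yields the lemma.
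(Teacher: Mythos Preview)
The paper's proof is considerably more direct than yours. It treats $\hat\theta_n^M$ as the maximizer of the penalized criterion
\[
G_n^M(\theta) \;=\; \bar\ell_n(\theta) - \tfrac{1}{n}\log P_{n,\theta}(M),
\]
uses $G_n^M(\hat\theta_n^M)\ge G_n^M(\theta_0^M)$ to rearrange into
\[
\bar\ell_n(\hat\theta_n^M) \;\ge\; \bar\ell_n(\theta_0^M) \;-\; \tfrac{1}{n}\log P_{n,\theta_0^M}(M) \;+\; \tfrac{1}{n}\log P_{n,\hat\theta_n^M}(M),
\]
invokes \eqref{exp-inf-prob} to declare the penalty difference $o(1)$, and then applies Theorem~5.14 of van der Vaart directly. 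There are no score equations, no Cauchy--Schwarz bounds on conditional means, and no separate localization.

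Your score-equation route is not wrong in principle, but it is roundabout. The step you flag as the ``main obstacle''---showing via the penalized M-estimator that $\hat\theta_n^M$ lands in a neighborhood of $\theta_0^M$---is essentially the whole of the paper's argument. Once you have $\bar\ell_n(\hat\theta_n^M)\ge\bar\ell_n(\theta_0^M)-o(1)$, the argmax theorem already delivers consistency; the subsequent inversion of $\theta\mapsto\nabla A(\theta)$ through your Cauchy--Schwarz bound is redundant.

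There is moreover a gap in your localization as written. You correctly observe that \eqref{exp-inf-prob} bounds the penalty only by $1+o(1)$, and you propose to ``choose $U$ large enough'' so that the identifiability gap on $K\setminus U$ exceeds $1$. But the ball $U$ in \eqref{newSup} on which you have the uniform $P_{n,\theta}(M)^{-1}=o(n)$ control---and hence your Cauchy--Schwarz bound---is fixed by hypothesis; you cannot enlarge it. So the large neighborhood your M-estimator argument reaches and the small ball where the score-equation inversion is valid need not coincide, and the ``two-stage'' bridging you allude to is never carried out. Finally, condition~\eqref{true-o(n)} concerns the \emph{true} selection probability $P_n(M)$ and is irrelevant in the unconditional setting of this lemma; it plays no role here.
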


\proof
Consider once again the conditional MLE
\begin{align*}
\hat{\theta}^{M}_n &= \arg\max_{\theta^M} G_n^M  
= \arg\max_{\theta^M} \frac{1}{n} \sum_{i=1}^{n}\left[ \ell_{\theta^M}(y_i)-\frac{1}{n}\log P_{n,\theta^M}(M) \right] \\
& := \bar{\ell}_n(\theta^M) - \frac{1}{n}\log P_{n,\theta^M}(M).
\end{align*}
where $ \ell_{\theta^M}(y_i)$ is the unconditional log-likelihood of $y_i$. We are evaluating the properties of the conditional estimator in the unconditional setting where $M$ is designated for inference before the data are observed. In this setting, the conditional MLE can be considered an M-estimator obtained from performing inference under a misspecified likelihood. 

We now show that $\hat\theta_n^M$ is consistent for the
$\theta^M_0$.  We have
$$
\sup_{\theta^M}{G}_n^{M}(\theta^M) \geq {G}_n^{M}(\theta_0^M),
$$
which implies that
\begin{equation}\label{firstProbDiff}
 \bar{\ell}_n(\hat{\theta}^{M}_n) \geq
\bar{\ell}_n(\theta_0^{M})
- \frac{1}{n}
\log P_{n,\theta^M_0}(M) + \frac{1}{n} \log P_{n,\hat{\theta}^M_n}(M).
\end{equation}
Equation \eqref{firstProbDiff} together with assumption \eqref{exp-inf-prob} gives
\begin{equation}\label{op1eq}
\bar{\ell}_n(\tilde{\theta}^M_n) \geq \bar{\ell}_n(\theta_0^M) - o(1).
\end{equation}
Thus, the conditions for consistency as given by
\citet{van2000asymptotic} (Theorem 5.14 p.~48) are satisfied. The implication of \eqref{op1eq} is that in the unconditional setting the conditional M-estimator is a consistent estimator. 
\qed

Next, we show that the difference between the conditional expectation of the sufficient statistic $\bar{T}_n$ converges to the unconditional expectation. This result will assist us later in proving a law-of-large number type statement for $\bar{T}_n$ under the conditional distribution.  

\begin{lemma}\label{expectationLemma}
Under the assumptions of Theorem \ref{thm:consistency}, for all $\delta < 1/2$,
$$
 n^{\delta} \| \E_{\theta^M_0}(\bar{T}_n) -  \E_{\theta^M_0}(\bar{T}_n|M) \| \rightarrow 0 .
$$
\end{lemma}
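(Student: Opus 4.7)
The plan is to reduce the lemma to a coordinate-wise estimate, express the difference as a ratio whose numerator is the covariance of $\bar T_n$ with the indicator $I_M$ and whose denominator is the selection probability, and then control the numerator by exponential-family concentration of $\bar T_n$ around its mean. Since $p$ is fixed and finite, the vector norm is bounded (up to a constant) by the maximum over coordinates, so it suffices to bound each coordinate separately.

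For the $j$th coordinate, the total-expectation identity yields
$$
\E_{\theta^M_0}(\bar T_{n,j}) - \E_{\theta^M_0}(\bar T_{n,j}\mid M) \;=\; -\,\frac{\E_{\theta^M_0}\!\bigl[\bigl(\bar T_{n,j}-\E_{\theta^M_0}\bar T_{n,j}\bigr)\,I_M\bigr]}{P_{n,\theta^M_0}(M)}.
$$
Because $\{p_\theta\}$ is a regular exponential family, the MGF of $T(y_i)$ exists in a neighborhood of $\theta^M_0$, so each coordinate of $T(y_i)-\E_{\theta^M_0}T(y_i)$ is sub-exponential with $n$-independent parameters. Chernoff's inequality then yields a sub-Gaussian-type bound $P_{\theta^M_0}(|\bar T_{n,j}-\E_{\theta^M_0}\bar T_{n,j}|>t)\le 2e^{-cnt^2}$ for all $t$ in a fixed neighborhood of zero and some constant $c>0$. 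Splitting the numerator at a threshold $r>0$ and applying a Mill's-ratio bound to the Gaussian tail integral gives
$$
\E\bigl|\bar T_{n,j}-\E\bar T_{n,j}\bigr|I_M \;\le\; r\,P_{n,\theta^M_0}(M) + \int_r^\infty 2e^{-cnt^2}\,dt \;\le\; r\,P_{n,\theta^M_0}(M) + \frac{e^{-cnr^2}}{c\,n\,r}.
$$

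Choosing $r=\sqrt{\log(1/P_{n,\theta^M_0}(M))/(cn)}$ equates $e^{-cnr^2}$ with $P_{n,\theta^M_0}(M)$, and after dividing through by $P_{n,\theta^M_0}(M)$ the dominant term is of order $r$, giving
$$
\bigl|\E_{\theta^M_0}(\bar T_{n,j})-\E_{\theta^M_0}(\bar T_{n,j}\mid M)\bigr| \;=\; O\!\left(\sqrt{\tfrac{\log(1/P_{n,\theta^M_0}(M))}{n}}\right).
$$
Conditions \eqref{true-o(n)} and \eqref{newSup} give $P_{n,\theta^M_0}(M)^{-1}=o(n)$, hence $\log(1/P_{n,\theta^M_0}(M))\le \log n$ for all $n$ sufficiently large; so the right-hand side is $O(\sqrt{\log n/n})$. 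Taking the maximum over the finite set $j=1,\dots,p$ and multiplying by $n^\delta$ yields $n^\delta \|\E_{\theta^M_0}(\bar T_n)-\E_{\theta^M_0}(\bar T_n\mid M)\|=O(n^{\delta-1/2}\sqrt{\log n})$, which tends to zero for every $\delta<1/2$.

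The main obstacle I anticipate is checking that the chosen $r$ lies in the range where the sub-Gaussian bound $2e^{-cnt^2}$ is valid: since $r=O(\sqrt{\log n/n})\to 0$ this is automatic, but one must verify that $c$ can be chosen uniformly in a fixed neighborhood of $\theta^M_0$ so that the implicit constants do not blow up. A second routine technicality is that the argument as written uses $P_{n,\theta^M_0}(M)$, i.e.\ the model-based probability, rather than the true $P_n(M)$ appearing in \eqref{true-o(n)}; under \eqref{newSup} the former inherits the same $o(n)$ bound, so the substitution is harmless.
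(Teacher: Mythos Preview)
Your argument is correct and takes a genuinely different route from the paper's. You attack the deterministic quantity directly: write the difference as $-\mathrm{Cov}_{\theta^M_0}(\bar T_{n,j},I_M)/P_{n,\theta^M_0}(M)$, control the numerator by exponential-family concentration of $\bar T_n$, and balance the truncation level $r$ against the Gaussian tail. This yields the clean rate $O(\sqrt{\log n/n})$ and immediately gives $n^\delta\cdot o(n^{-1/2+\epsilon})\to 0$. The paper instead argues indirectly through the conditional MLE itself: it invokes Lemma~\ref{conditionalinnaive} (unconditional consistency of the conditional estimator), uses the score identity $\bar T_n=\E_{\hat\theta^M_n}(\bar T_n\mid M)$, combines this with the unconditional LLN $n^\delta(\bar T_n-\E_{\theta^M_0}\bar T_n)=o_p(1)$, and then closes the gap between $\E_{\hat\theta^M_n}(\bar T_n\mid M)$ and $\E_{\theta^M_0}(\bar T_n\mid M)$ via assumption~\eqref{newSup}. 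Your approach is more elementary, avoids the auxiliary lemma and the somewhat circular use of the estimator in proving a property later used to analyze that same estimator, and gives an explicit rate; the paper's approach, on the other hand, shows how the score equation already encodes the result and may generalize more readily to settings where sharp concentration is harder to establish. Two small points you should tidy: (i) when $P_{n,\theta^M_0}(M)$ is bounded away from zero your optimized $r$ degenerates, so separate that trivial case (Cauchy--Schwarz on the covariance gives $O(n^{-1/2})$ directly); (ii) the integral $\int_r^\infty$ runs outside the sub-Gaussian regime, so split at a fixed $t_0$ and use the sub-exponential tail $e^{-c'nt}$ beyond it---this piece is $O(e^{-c'nt_0}/n)$ and, since $P_{n,\theta^M_0}(M)^{-1}=o(n)$ by~\eqref{newSup}, remains negligible after division.
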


\proof According to Lemma \ref{scoreLemma}, if $y_i \sim f_{\theta^M}$ with
$f_{\theta^M}$ an exponential family distribution and $P_{n,\theta^M}(M|\bar{T}_n) \in \{0, 1\}$
then the first derivative of the conditional log-likelihood is 
$$
\frac{\partial}{\partial \theta^M} G_n(\theta^M) 
= \frac{1}{n} \sum_{i=1}^{n} T(y_i) - \E_{\theta^M}(T(y_i) | M)
:= \bar{T}_n - \E_{\theta^M}(\bar{T}_n |M).
$$
At the maximizer of $G_n(\theta^M)$, for any $\delta < 1/2$, we have:
$$
n^{\delta} \left[ \bar{T}_n - \E_{\hat\theta^M_n}(\bar{T}_n |M) \right] = 0 ,
$$
which implies that
$$
n^{\delta}(\bar{T}_n - \E_{\theta^M_0}(\bar{T}_n) )
+ n^{\delta}(\E_{\theta^M_0}(\bar{T}_n) -  \E_{\hat\theta^M_n}(\bar{T}_n |M)) = 0 .
$$
Since $n^{\delta}(\bar{T}_n - \E_{\theta^M_0}(\bar{T}_n) )=o_p(1)$ by law of large numbers, we obtain that
$$
n^{\delta}(\E_{\theta^M_0}(\bar{T}) -  \E_{\hat\theta^M_n}(\bar{T} |M)) = o_p(1).
$$

Finally in order to prove the desired results we must show that
$$
E_{\theta^M_0}(\bar{T} |M) - E_{\hat\theta^M_n}(\bar{T} |M) \rightarrow 0.
$$
It is clear that since $\theta^M_n \rightarrow\theta^M_0$, a fixed continuous function of $\hat\theta^M_n$ will converge as the sample size grows. However,  $E_{\hat\theta^M_n}(\bar{T} |M)$ is a function of both $\hat \theta^M_n$ and $n$, and we must make sure that it does not vary too much with $n$ in order for the desired convergence to hold. Define $t = a^{T}\bar{T}_n$. By  assumption \eqref{newSup} we have that for some sufficiently large $n$:
$$
\sup_{\theta^M: \|\theta^M - \theta^M_0\| < \frac{1}{\sqrt{n}}} 
|\E_{\theta^M_0}(t |M) - \E_{\theta^M}(t|M)| 
\leq \sup_{\theta^M: \|\theta^M - \theta^M_0\| < \frac{1}{\sqrt{n}}} \frac{\Var_{\theta^M}(t)}{P_{n,\theta^M}(M)} \frac{1}{\sqrt{n}}.
$$ 
Because $y$ is of an exponential distribution and $t$ is an average we can bound the unconditional variance in the neighborhood of $\theta^M_0$. For a sufficiently small $\varepsilon > 0$ there exists a constant $C > 0$ such that, 
$$
\sup_{\theta^M: \|\theta^M - \theta^M_0\| \leq \varepsilon } \Var_{\theta^M}(t) < \frac{C}{n}
$$ 
because $\Var_{\theta^M}(t)$ is a continuous function and the supremum is taken over a compact set. Thus, by the $\sqrt{n}$ consistency of $\hat\theta^M_n$ for $\theta^M_0$, the difference satisfies $n^{\delta} |\E_{\theta^M_0}(t |M) - \E_{\hat\theta_n^M}(t|M)| = o(1)$ for any vector $a$ as well as for $\bar{T}_n$ itself and the claim follows. 
\qed

\vspace{0.1 cm}

We are now ready to prove Theorem \ref{thm:consistency}. The first step in the proof is showing that $\bar{T}_n$ converges in probability conditionally on $M$. This result is a simple consequence of Markov's inequality and our assumption that $P_{n}(M)^{-1} = o(n)$. Set an arbitrary vector $a \in \mathbb{R}^{p}$ and define $t = a^{T} \bar{T}_n$. By Markov's inequality,
\begin{equation}\label{eq:markov}
P_n(|t - \E_n(t|M)| > \varepsilon|M) 
\leq \frac{\Var_n(t|M)}{\varepsilon^{2}} 
\leq \frac{O(n^{-1})}{\varepsilon ^{2} P_n(M)} = o(1).
\end{equation}
To see why \eqref{eq:markov} holds, write:
$$
\Var_n(t|M) = \int \frac{(t - \E(t))^{2}}{P_n(M)} I\{S_n(\bar{T}_n) = M\}f(t)d(t) \; -\; \left[\E(t) - \E_n(t|M)\right]^{2}
$$$$
\leq \frac{a^{T} \Var( T(y_i)) a}{nP_n(M)}.
$$
By the fact that \eqref{eq:markov} holds for any arbitrary vector $a$, together with Lemma \ref{expectationLemma}, we can determine that conditionally on $M$, $\bar{T}_n \rightarrow_p \E(\bar{T}_n)$. 

By our assumption that  the log-likelihood $l_{\theta^M}(y)$ is a continuous mapping of $T(y)$, assumption \eqref{exp-inf-prob} and Lemma \ref{expectationLemma}, conditionally on the selection of $M$ we have:
$$
\frac{1}{n} \sum_{i=1}^{n} \ell_{\theta^M}(y_i) - \frac{1}{n} \log P_{n,\theta^M}(M) \rightarrow_p  \E[\ell_{\theta^M}(y_i)].
$$
The rest of the proof follows in a similar manner to the proof of Lemma \ref{conditionalinnaive} where the law of large numbers in the proof of Theorem 5.14 in \citet{van2000asymptotic} is replaced by \eqref{eq:markov} and our assumption that $\bar{\ell}_n(\theta^M)$ is a continuous function of $\bar{T}_n$. 
\qed

\subsection{Proof of Lemma \ref{slowLemma}}
In the context of this proof we use the following notation:
$$
A_0(M,s) :=   \left\{ l_o(M, s) \leq \bA_0(M, s) y < u_0 (M, s)\right\},
$$$$
A_1(M,s) := \left\{  \bA_1(M, s) y < u_1 (M, s)\right\} .
$$
For ease of exposition, we make a simplifying assumption that 
$$
\lim_{n\rightarrow\infty} \frac{\lambda_n}{n^{\frac{1}{2}}} = \lambda^*.
$$

We begin by bounding the probability of not selecting the null-set. By our assumption that $n^{-1}\bX^{T}\bX$ converges, we have that the thresholds $l_0(M,s)$ and $u_0(M,s)$ also convergence for all candidate models and sign permutations. Furthermore, by our assumption regrading the rate in which $\lambda_n$ grows and the expectation of $\bA_0(M)y$, 
$$
\bA_0(M) y \rightarrow^D N(0, \bSigma(\bA_0)), 
$$
where,
$$
\bSigma(\bA_0) = \lim_{n\rightarrow\infty} \frac{\sigma^{2}}{\lambda_n^{2}} 
\bX^{T}_{-M}(I - \bX_M(\bX^{T}_M \bX_M)^{-1} \bX^{T}_M)\bX_{-M}.
$$
Thus, 
$$
\lim_{n\rightarrow\infty} P_n(\bA_0(M,s)) = c_0(M,s) > 0, \;\;\forall M, s.
$$
Since the probability of $\bA_0(M,s)$ can be bounded in a uniform manner, we can set
$$
c_0(M) := \min_{s} c_0(M,s), 
$$
and obtain a lower bound for the probability of selecting $M$ by bounding
$$
P_n(M) \geq c_0(M) P_n\left( \cup_{s}  \bA_{1}(M,s) \right) := c_0(M) P_n\left(\bA_1(M) \right).
$$

We bound $P_n(\bA_1(M))$ next. Recall that the threshold a regression coefficient must cross is
given by
$$
u_1(M,s) = -\lambda_n \; \text{diag}(s) (\bX^{T}_M \bX_M)^{-1} s.
$$
This threshold is a bit unwieldy, as it depends on the signs of the active set and an exact realization of $\bX_M$. Since we are interested in asymptotic behavior of random quantities, it will be sufficient to work with the limiting value of the threshold:
$$
u^*_1(M,s) = \lim_{n\rightarrow\infty} \sqrt{n} u_1(M,s) = -\lambda^* \; \text{diag}(s) \bSigma_M^{-1} s,
$$
Now, in order to eliminate the dependence on the signs of the active set define:
$$
u^*_1(M) := \sup_{s}\sup_{j} \left|\left(\lambda^* \; \text{diag}(s) \bSigma_M^{-1} s\right)_j\right|,
$$
and define an event:
$$
\tilde{\bA}_1 := \{\sqrt{n}|\eta_j| > u^*_1(M), \;\; \forall j \in M\}.
$$
In $\tilde A_1$ we replaced all coordinate thresholds with the largest threshold, and so it is clear that:
$$
\lim\sup_{n\rightarrow\infty}\frac{P_{n,\beta^M}(\tilde{A}_1)}{P_{n,\beta^M}(A_1)} \leq 1.
$$
Furthermore, we have the lower bound
\begin{equation}\label{AtildeBound}
P_{n,\beta^M} (\tilde{A}_1) \geq \prod_{j \in M} \left( \Phi(-u^*(M) ; 0, \sigma^{2}_{j, -j}) + 1 -  \Phi(u^*(M) ; 0, \sigma^{2}_{j, -j})\right)
, \;\;\;\ \forall\beta^M \in \mathbb{R}^{|M|},
\end{equation}
where $\sigma^{2}_{j, -j} := Var(\sqrt{n}\eta_j | \eta_{-j})$. See the proof of Theorem \ref{thm:nmeans} for details on how this bound is derived. The rest follows by our normality assumption and the fact that \eqref{AtildeBound} holds for all $\beta^M$ including $\beta^M_0$.
\qed

\subsection{Proof of Lemma \ref{fastLemma}}
We begin by treating the probability of satisfying the conditions for not selecting the variables not in the model. Using the same notations as in the proof of Lemma \ref{slowLemma}, the following limit holds:
$$
\bSigma(A_0) \rightarrow 0, 
$$
and consequently, by assumption \eqref{irrep}:
$$
\lim_{n\rightarrow\infty}P_n(\bA_0(M,s)) = 1, \;\;\forall s.
$$

Next, we treat the probabilities of satisfying the conditions for selecting the variables included in the model. As before, we make a simplifying assumption that there exists a constant $0 < \delta <0.5$ such that:
$$
\frac{\lambda_n}{n^{0.5 + \delta}} = \lambda^*, 
$$
In the fast scaling case, a lower bound on $P_{n,\beta^M}(\tilde{A}_1)$ no longer exists because the threshold $u^*(M)$ grows with the sample size. However, we can show that a satisfactory bound exists at $\beta^M_0$. Since in this setting $\lambda_n$ grows faster than $\sqrt{n}$, we redefine the limit of the selection threshold:
$$
u^*_1(M,s) = \lim_{n\rightarrow\infty}\frac{\sqrt{n}}{n^{\delta}} u_1(M,s) = -\lambda^* \; \text{diag}(s) \bSigma_M^{-1} s.
$$
We can redefine $u^*_1(M)$ in an analogous manner. Now, we rewrite the bound \eqref{AtildeBound} at the point $\beta^M = \beta^M_0$ and with $u^*_1(M)$ properly scaled as
$$
P_{n,\beta^M_0} (\tilde{A}_1) \geq \prod_{j \in M} \left( \Phi(- u^*(M) n^\delta; \sqrt{n}\beta^M_0, \sigma^{2}_{j, -j}) + 1 -  \Phi( u^*(M) n^\delta; \sqrt{n}\beta^M_0, \sigma^{2}_{j, -j})\right)
$$
With no loss of generality assume that $\beta^M_0 < 0$ to obtain the desired bound:
$$
\lim_{n\rightarrow \infty} P_{n,\beta^M_0} (\tilde{A}_1) \geq 
\lim_{n\rightarrow\infty} \prod_{j \in M} \Phi\left(- u^*(M) n^\delta; \sqrt{n}\beta^M_0, \sigma^{2}_{j, -j}\right) = 1,
$$
where the limit holds because $\delta < 0.5$. A similar result holds in a small neighborhood $U$ of $\beta^M_0$ because the probability of selection is continuous in $\beta^M$.

In order to bound the infimum of $P_{n,\beta^M}(A_1)$, we again start from \eqref{AtildeBound} to get:
\begin{align} \nonumber
P_{n, \beta^M} (\tilde{A}_1) &\geq 
\prod_{j \in M} \left( \Phi\left(- u^*(M) n^{\delta};0, \sigma^{2}_{j, -j}\right) + 1 -  \Phi\left( u^*(M) n^{\delta}; 0, \sigma^{2}_{j, -j}\right)\right) 
\\ \nonumber
&\geq \prod_{j \in M}\Phi\left(- u^*(M) n^{\delta}; 0, \sigma^{2}_{j, -j}\right) 
\\
\label{normUpperbound}
&\geq C \left(\frac{n^{\delta} u^*_1(M) / \sigma_{j,-j}}{1 +  u^*_1(M)^2 n^{2\delta} / \sigma^{2}_{j, -j}}\right)^{|M|} 
\prod_{j \in M} e^{-\frac{ u_1^*(M)^{2} n^{2\delta}}{2\sigma^{2}_{j, -j}}} \\ \nonumber
&= O\left(\frac{e^{-n^{2\delta}}}{n^{\delta|M| / 2}} \right).
\end{align}
The lemma follows by our assumption that $\delta < 0.5$. In \eqref{normUpperbound} we used the inequality:
$$
\Phi(t;0, \sigma^{2}) \geq C \frac{t/\sigma}{1 + t^{2} / \sigma^{2}} e^{-\frac{t^{2}}{2\sigma^{2}}}.
$$ 
\qed

\section{Numerical examples for the Lasso MLE}\label{LassoExampleAppendix}

\begin{figure}[t] 
\includegraphics[width=\maxwidth]{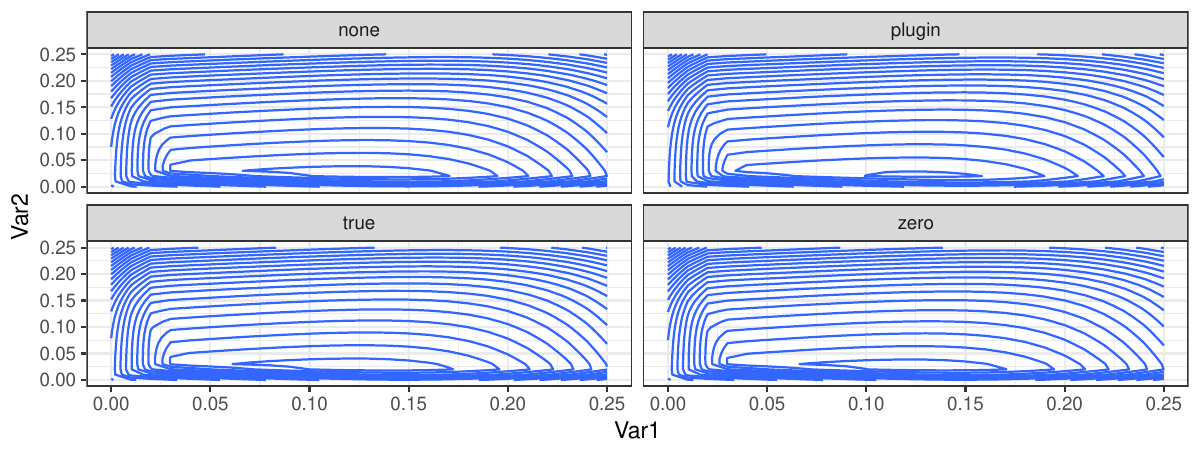} 
\caption[Contours of the conditional Lasso likelihood - A]{Contour plots for the first numerical experiment described in Appendix \ref{LassoExampleAppendix}. The contour plots describe the log-likelihood of a model selected by the Lasso as a function of the values of the regression coefficients where the probability of not selecting the inactive set was computed in four different ways as described in the text.} 
\label{LassoContours1} \end{figure}

\begin{figure}[t] 
\includegraphics[width=\maxwidth]{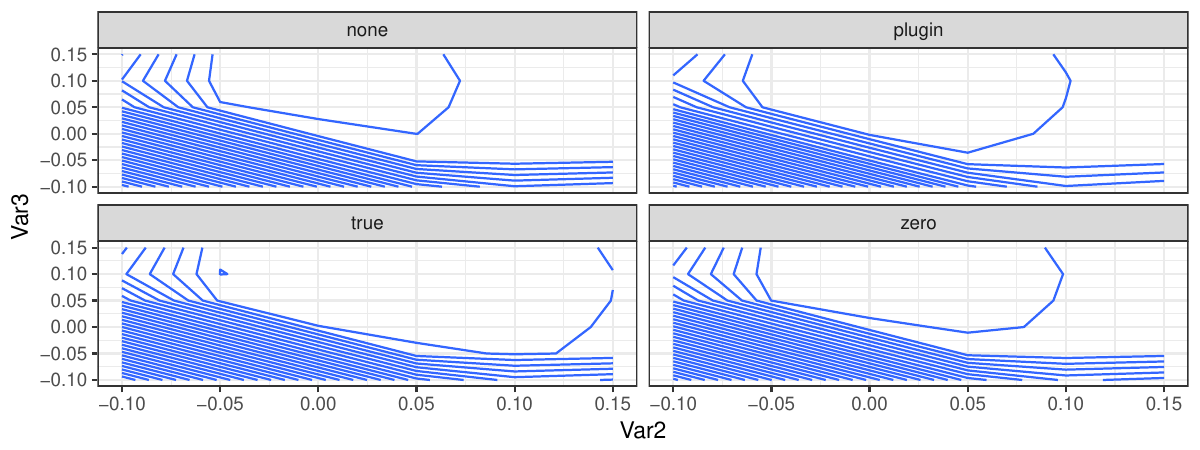} 
\caption[Contours of the conditional Lasso likelihood - B]{Contour plots for the second numerical experiment described in Appendix \ref{LassoExampleAppendix}. The contour plots describe the log-likelihood of a model selected by the Lasso as a function of the values of the regression coefficients where the probability of not selecting the inactive set was computed in four different ways as described in the text.} 
\label{LassoContours2} \end{figure}

In Section \ref{sec:lassoSelect} we discuss the conditions that must hold in order for a specific model to be selected by the Lasso and propose to estimate the mean vector $\bA_0(M)E(y)$ by $0$. Here, we propose some alternatives and seek to demonstrate that the proposed method is a reasonable one. 

We generate data using the same process as described in Example \ref{LassoExample} with parameter values $\rho = 0.5$, $n = p = 100$, $k = 3$ and $\snr = 0.5$. We selected a model with two active parameters of positive sign with observed values of $0.17$ and $0.13$. In order to  compute the conditional log-likelihood for this example we must decide on appropriate estimates for $E(\bA_0(M)y)$. We present results for three options. The first is to use the observed value, $\bA_0y$ as an estimate for its expectation, we term this method `plug-in'. The second is to work under the assumption that $E(\bA_0y) \approx 0$, estimating the expectation with a vector of zeros, we term this method `zero'. A third option is to simply assume that $P(l < A_0y <u )\approx 1$ for all signs sets, we term this method `none'. Finally, we also compute the likelihood under the truth, setting $E(\bA_0 y) = \bA_0 E(y)$.

We draw the contour plots for the two-dimensional log-likelihoods as a function of the selected regression coefficients in Figure \ref{LassoContours1}. While the contour plots are visually similar, the values of the log-likelihoods differ slightly. For the `none' and `zero' methods the log-likelihood was maximized at $0.14, 0.02$ at a log-likelihood value of $14.2$. This is similar to the log-likelihood computed under the true expectation, where the maximum was also obtained at $0.14, 0.02$ and at a slightly different value of $14.3$. Finally, for the plug-in method the maximum was obtained at $0.13, 0.02$ with a value of $16.9$. Thus, for this example, the maximum likelihood estimates computed using the different imputation methods yielded results that are essentially equivalent. In this example the true probability of $P(l_0 < \bA_0y < u_0)$ was close to $1$ for all sign permutations.

 In a second example we generate data using parameter values $\rho = 0.8$, $n = 100$, $p = 500$, $k = 5$ and $\snr = 0.2$. Here we selected a model with four variables where the observed refitted regression coefficients estimates were $0.13, 0.17, 0.21$ and $0.15$. For all estimation methods the maximum of the log-likelihood was obtained at  approximately $0, -0.05, 0.1, 0$. The values of the log-likelihood function at its maximum was $15.9$ when no imputation was used, $19.9$ for plugin imputation, $16.1$ for the zero imputation and $16.7$ when the true parameter value was used to compute the log-likelihood. The contour of the log-likelihood function are plotted in Figure \ref{LassoContours2} for the second and third variables, keeping the values of the first and last coefficients fixed at zero.

\section{Description of algorithms}
\begin{algorithm}\label{mvtSGA}
\IncMargin{1em}
\SetKwInOut{Input}{input}
\SetKwInOut{Output}{output}
\SetKwInOut{Initialize}{initialization}

\Input{$y, l, u\in\mathbb{R}^{n}$, $\Sigma^{-1} \in \mathbb{R}^{p\times p}$.}
\Output{$\hat\mu\in\mathbb{R}^{p}$.}
\BlankLine
\Initialize{$y^{0} \leftarrow y$, $\mu^0 \leftarrow y$.}
\BlankLine
\For{$i \in 1:I$} {
	Set $z^0 \leftarrow y^{i-1}$\;
	\For{$t \in 1:T$} {
		\For{$j \in 1:p$} {
		Sample $z^{t}_j \sim f_{\mu^i}(z_j | M, z_1^{t},\dots,z_{j-1}^{t},z_{j+1}^{t-1},\dots,z_{p}^{t-1})$\;
		}
	}
	Set $y^i \leftarrow z^{T}$\;
	\For{$j \in M$} {
		$\mu^{i}_j \leftarrow \mu^{i-1}_j + \gamma^{i}\Sigma^{-1}_{j,.}(y - y^{i})$\;	
	}		
}
\Return{$\mu^I$\;}
\caption{Stochastic ascent algorithm for the normal means problem.}
\end{algorithm}\DecMargin{1em}

%%%% MH ALGORITHM

\begin{algorithm}\label{LassoMH}
\SetKwInOut{Input}{input}
\SetKwInOut{Output}{output}
\SetKwInOut{Initialize}{initialization}
\Input{$\eta\in\mathbb{R}^{|M|}$, $\lambda \in \mathbb{R^+}$, $X \in
  \mathbb{R}^{n\times p}$, $\sigma^{2} \in \mathbb{R}^{+}$.}

\Output{A sample point $\eta$.}
\BlankLine

\For{$t \in 1:T$} {
	Sample $\xi^{t} \sim f(\xi|M, \eta)$ \;
	\For{$j \in 1:p$} {
		Set $r^{\rightarrow} \leftarrow \eta$ \;
		Sample $r_j^{\rightarrow} \sim f\left(\eta_j \middle| \{\eta_j < l_j\} \cup\{ \eta_j > u_j\},\eta_{-j}\right)$\;
		\If{$l_0(M, \sign(r^{\rightarrow})) < \xi^{t} < u_0(M, \sign(r^{\rightarrow}))$}{
		\If{$r^{\rightarrow}$ is in the set from \eqref{oneCondition}} {
			Set $\eta \leftarrow r^{\rightarrow}$ \;
		}
		\Else{
			\For{$k\neq j$} {
				Sample $r_{k}^{\rightarrow} \sim \text{TN}(a_k,b_k, \eta_k, \sigma^{2}_{k, -k})$ \;
			}
		Set $r^{\leftarrow} \leftarrow r^{\rightarrow}$ \;
		Set $r^{\leftarrow}_j \leftarrow \eta_j$ \;
		\If{$r^{\leftarrow}$ is not in the set from \eqref{oneCondition}} {
		Compute $p^t_j$ as in \eqref{MHtransition} \;
		Sample $U \sim \Unif(0,1)$\;
		\If{$U < p^t_j$} {
			Set $\eta \leftarrow r^{\rightarrow}$\;
		}
	}
	}}	
}
}
\Return{$\eta$ \;}
\caption{Sampler for the post-selection distribution under selection
  by Lasso.}
\end{algorithm}
%%%%%%%%%%%%%

%%%% Lasso SGD ALGORITHM

\begin{algorithm}\label{LassoSGD}
\IncMargin{1em}
\SetKwInOut{Input}{input}
\SetKwInOut{Output}{output}
\SetKwInOut{Initialize}{initialization}

\Input{$I \in \mathbb{N}$, $\lambda, \sigma^{2} \in \mathbb{R}^+$, $X \in \mathbb{R}^{n\times p}$, $y \in \mathbb{R}^{n}$.}
\Output{$\hat\beta \in \mathbb{R}$.}
\BlankLine
\Initialize{Set $\hat\beta^0, \eta^0 \leftarrow (X^{T}_MX_M)^{-1} X_M^{T} y$.}
\BlankLine

\For{$i \in 1:I$} {
	Sample $\eta^i$ using Algorithm \ref{LassoMH} \;
	Set $\hat\beta^{i} \leftarrow \hat\beta^{i-1} + \gamma_i(X_m^{T}y - (X^{T}_MX_M)\eta^i)$\;
	\For{$j \in 1:p$} {
		Set $\hat\beta_j^t \leftarrow \sign(\hat\beta^0_j) \max(0,\sign(\hat\beta^0_j)\hat\beta^i_j)$\;
	}
}
\Return{$\hat\beta^{I}$}\;
\caption{Stochastic ascent algorithm for the Lasso.}
\end{algorithm}\DecMargin{1em}

\end{document}